\newtheorem{theorem}{Theorem}
\newtheorem{proposition}[theorem]{Proposition}
\newtheorem{corollary}[theorem]{Corollary}
\newtheorem{example}{Example}
\newtheorem{definition}[theorem]{Definition}
\newtheorem{lemma}[theorem]{Lemma}
\newcommand{\WF}{K}
\newcommand{\WFp}[1]{\textrm{K}_{#1}}
\newcommand{\WFpp}[1]{\textrm{F}_{#1}}
\newcommand{\Zd}{\Z_d}
\newcommand{\Z}{{\mathbb Z}}     
\renewcommand{\leq}{\leqslant}
\renewcommand{\geq}{\geqslant}
\newcommand{\D}{\mathbf{D}}
\newcommand{\A}{\mathbf{A}}
\newcommand{\equations}[1]{{\llbracket #1\rrbracket}}
\newcommand{\DA}{\mathbf{DA}}
\newcommand{\QV}{\mathbf{QV}}
\newcommand{\QLV}{\mathbf{QLV}}
\newcommand{\LV}{\mathbf{LV}}
\newcommand{\lV}{\mathbf{\ell V}}
\newcommand{\V}{\mathbf{V}}
\newcommand{\gV}{\mathbf{gV}}
\newcommand{\Com}{\mathbf{Com}}
\newcommand{\cd}{\textrm{cd}}
\newcommand{\J}{\mathbf{J}}
\newcommand{\cF}{\mathcal{F}}
\newcommand{\FO}{\mathbf{FO}}
\newcommand{\MSO}{\mathbf{MSO}}
\newcommand{\BS}{\mathbf{{\cal B}\Sigma}}
\newcommand{\Ae}{A_d}
\newcommand{\Reg}{\mathrm{Reg}}
\newcommand{\MOD}{\mathrm{MOD}}
\newcommand{\loc}{\mathrm{LOC}}
\newcommand{\MODV}{\mathbf{MOD}}
\newcommand{\Fsigmod}{\cF[\sigma,\MOD]}
\newcommand{\Fsig}{\cF[\sigma]}
\newcommand{\Fsigmodd}{\cF[\sigma, \MOD^d]}
\newcommand{\rank}{$\text{rank}$}
\newcommand{\inv}{^{-1}}
\newcommand{\Ob}{$\,\textrm{Ob}$}
\newenvironment{itemize2}[1]%
   {%
   \begin{list}{--}%
   {\setlength{\leftmargin}{0cm}%
   }%
   #1%
   }%
   {%
   \end{list}%
   }%
\newenvironment{conditions}
{%
	\begin{list}{\rm (\theenumi)}%
	{\noindent%
		\usecounter{enumi}%
		\setlength{\topsep}{2pt}%
		\setlength{\partopsep}{0pt}%
															 \setlength{\itemsep}{2pt}%
		\setlength{\parsep}{0pt}%
		\setlength{\leftmargin}{2.5em}%
		\setlength{\labelwidth}{1.5em}%
		\setlength{\labelsep}{0.5em}%
		\setlength{\listparindent}{0pt}%
		\setlength{\itemindent}{0pt}%
	}%
}%
{\end{list}}%
\newenvironment{bulitem}
{%
	\begin{list}{\rm $\bullet$}%
	{\noindent%
		\usecounter{enumi}%
		\setlength{\topsep}{2pt}%
		\setlength{\partopsep}{0pt}%
															 \setlength{\itemsep}{2pt}%
		\setlength{\parsep}{0pt}%
		\setlength{\leftmargin}{2.5em}%
		\setlength{\labelwidth}{1.5em}%
		\setlength{\labelsep}{0.5em}%
		\setlength{\listparindent}{0pt}%
		\setlength{\itemindent}{0pt}%
	}%
}%
{\end{list}}%
\newcommand{\Arr}{\textrm{Arr}}
\newcommand{\tvi}{\vrule height 12pt depth 5 pt width 0 pt}
\DeclareFontFamily{U}{mathx}{\hyphenchar\font45}
\DeclareFontShape{U}{mathx}{m}{n}{
      <5> <6> <7> <8> <9> <10>
      <10.95> <12> <14.4> <17.28> <20.74> <24.88>
      mathx10
      }{}
\DeclareSymbolFont{mathx}{U}{mathx}{m}{n}
\newcommand{\tinf}[1]{\mathcal{I}_E(#1)}
\newcommand{\ACOM}{\mathbf{ACom}}
\author[L. Dartois, C. Paperman]{Luc Dartois\affiliationmark{1,2}
  \and Charles Paperman\affiliationmark{3}}
\title[Adding modular predicates]{Adding modular predicates to first-order fragments}
\affiliation{
  LIF, Aix-Marseille Université, France\\
  Université Libre de Bruxelles, Belgium\\
  Warsaw University, Poland}
\keywords{First order logic, automata theory, semigroup, modular predicates}
  	\date{\today}
\begin{document}
\maketitle

\begin{abstract}
We investigate the decidability of the definability problem for fragments of first order logic over finite words enriched with modular predicates.
Our approach aims toward the most generic statements that we could achieve, which successfully covers the quantifier alternation hierarchy of first order logic and some of its fragments.
 We obtain that deciding this problem for each level of the alternation hierarchy of
 both first order logic and its two-variable fragment when equipped with all regular numerical predicates
 is not harder than deciding it for the corresponding level
 equipped with only the linear order and the successor. For two-variable fragments we also treat the case of the signature containing
 only the order
 and modular predicates.

Relying on some recent results, this proves the decidability for each level of the alternation hierarchy of the two-variable first order fragment
while in the case of the first order logic the question remains open for levels greater than two.

The main ingredients of the proofs are syntactic transformations of first order formulas as well as the algebraic framework of finite categories.
%
\end{abstract}

\section{Introduction}

The equivalence between regular languages and automata~\citep{RabinScott59}
 as well as monadic second order
logic~\citep{Buchi60} and finite monoids~\citep{Nerode59} was the start of a
domain of research that is still active today.
	In this article, we are  interested in the logic on finite words, and more precisely the
	question we address is the \emph{definability problem} for fragments of logic.
	Fragments of logic are defined as sets of monadic second order formulas satisfying some restrictions, and are
	equipped with a set of predicates called a \emph{signature}.
	Then the definability problem of a fragment of logic $\cF$ consists in deciding
	 if
	a regular language can be defined by a formula of $\cF$.

	This question has already been considered and solved in many cases where the signature
	contains only the predicate $<$, which denotes the linear order over the positions of the word. For instance,
	a celebrated result by~\cite{Schutzenberger65} and~\cite{MP71}
	gave an effective algebraic characterization of languages definable by first order formulas.
	The decidability has often been achieved through
	algebraic means, showing a deep connection between algebraic and logical properties of
	a given regular language.
	This is the approach privileged in this article.

	We investigate the question of the behaviour of the decidability of some fragments
	when their signature is enriched with \emph{modular predicates}.
	These predicates allow to specify the congruence of the position of a variable modulo an integer.
	They form with the order and the local predicates the set of \emph{regular numerical predicates}.
	These predicates are exactly the formulas of monadic second order logic
	without letter predicates. Intuitively they correspond to the maximal
	class of \emph{numerical predicates} that can enrich the signature of a fragment of $\MSO$, while keeping the definable languages regular.
	This question was already considered in the case of first order logic ($\FO$) by~\cite{Bar92} and one
	of its fragments, the formulas without quantifier alternation, by~\cite{Pel92}.

	The enrichment by regular numerical predicates arose in the context of the
	\emph{Straubing's conjectures}~\citep{Straubing94}. Roughly speaking, these conjectures state that
	deciding the definability  of a regular language in a fragment of enriched logic corresponds to
	 deciding its circuit complexity.
	It is known~\citep{Pel92,Straubing94} that
	an enrichment of the classical fragments by regular numerical predicates is equivalent to an enrichment by the signature
	${[<,+1,\MOD]}$, where $+1$ denotes the \emph{local predicates}
	and $\MOD$ the \emph{modular predicates}.
	A first step toward  the study of fragments of logic with
	these predicates was initiated by~\cite{Straubing85}.
	He obtained  that
	adding the local predicates preserves the decidability for a large number of fragments. As a corollary
	of this work, Straubing obtained that the decidability of the alternation
	hierarchy of first order logic ($\BS_k$) equipped with $[<,+1]$ reduces to the
	decidability of the simpler one $[<]$.
	More recently,~\cite{KL13} proved the decidability of the
	alternation hierarchy of the two-variable first order
	fragment ($\FO^2_k$) equipped with $[<,+1]$ by extending the recent results by~\cite{KS12} and~\cite{KW12} on the
	decidability of this hierarchy with~$[<]$.

	In this context, the case of modular predicates is poorly understood.
	The study of this enrichment
	was first considered for first order logic by~\cite{Bar92}, and had been extended to the first
	level of its alternation hierarchy with the successor predicate by~\cite{Pel92},
	and later without it by~\cite{CPS06b}.
		The enrichment by a finite set of modular predicate was considered by~\cite{EI03}.
	Finally, the authors
	provided a characterization of the two-variable first order logic over the signature~$[<,\MOD]$~\citep{DP13}.
	In this paper, we focus on the enrichment by all regular predicates as well as the question of the enrichment
	by modular predicates only. This latter one surprisingly turns out to be more intricate.

	To study this enrichment in a generic setting, we offer a definition of fragment as a set of formulas satisfying some syntactic properties.
	This allows for some generic proofs instead of a one by one situation.
	The main applications of our theorems are then the \emph{quantifier alternations} hierarchies of the first order logic
	and its two-variable counterpart.
	Our main results state that for both of these hierarchies, the decidability of each level
	equipped with regular numerical predicates reduces to decidability of the same level with the signature $[<,+1]$.
	Then by using
	the recent decidability result of~\cite{KL13}, as well as the decidability of $\BS_2[<]$
	by~\cite{PZ14}, we deduce that the fragments $\FO^2[<,\MOD]$ and $\FO^2_k[\Reg]$, for any positive $k$, as well as $\BS_2[\Reg]$ are decidable.
	Our settings also reproves known results and apply to fragments of first order with small signatures.

	\paragraph{Proofs methods.}
	The proofs of the main results can be decomposed in two major steps.
	 The first part is rather classical and shows that the information given by a finite number of modular predicates can be put into the alphabet and thus we can reduce the problem to a question on the fragment over a bigger alphabet.
	The second part is dedicated to finding a systematic way to select,
	for a given regular language and a fragment, a finite number of
	modular predicates that can serve as witnesses of its definability.
	This is done through the use of the algebraic
	framework of varieties, using two mains approaches.
	The first one uses finite categories and the global of a variety, while the second one
	introduces a new  notion for
	varieties of semigroups that we call the infinitely testable property.
	Under some assumptions, we show that this property
	allows us to find such a witness set for modular predicates.

	\paragraph{Organization of the paper.}
	The next section is dedicated to the basic logical and algebraic definitions, and the main applications of our results to logic are presented in Section~\ref{Section:Main}.
	 Then Section~\ref{Section:Finite} deals with adding a finite number of modular predicates.
	 This is done through an easy reduction to adding predicates modulo a given congruence.
	 In Section~\ref{Section:Delay}, we then deal with the delay problem, which can be quickly stated as computing a finite set of congruences that can serve as a witness for the definability problem of a language.
	 More specifically, we first introduce the framework of categories as an extension of the monoids theory,
	 and use it to prove a delay for differents classes of fragments.
	 In Subsections~\ref{Subsection:Local} and~\ref{Subsection:FiniteRank}, we rely on an algebraic description of the global of a variety, which is a variety of finite categories.
	 Then Subsection~\ref{SsSection:InfTest} solves the delay for a class of fragments satisfying a given algebraic property, the so-called \emph{infinitely testable property}.

\section{Preliminary definitions}\label{Section:Defs}
	\subsection{Languages and Logic}
	We consider the monadic second order logic on finite words $\MSO[<]$ as usual
	(see~\cite{Straubing94} for example).
	We denote by $A$ an \emph{alphabet} and by $a$ a \emph{letter} of $A$.
	A word $u$ over an alphabet $A$ is a set of labelled positions ordered from $0$ to $|u|-1$, where $|u|$ is an integer denoting the \emph{length} of $u$.
	The set of words over $A$ is denoted $A^*$ and a subset $L$ of $A^*$ is called a \emph{language}.
	We also denote by $A^+$ the set of non-empty words.
	A language is said to be \emph{defined} by a formula if it corresponds exactly
	to the set of words that satisfy this formula.
	It is said to be \emph{regular} if it is defined by a $\MSO[<]$ formula.
	When syntactic restrictions are applied to $\MSO[<]$,
	one defines fragments of logic that characterize subclasses of regular languages.
	The most well-known fragment is probably the first order logic,
	whose expressive power was characterized
	thanks to the results of~\cite{MP71} and \cite{Schutzenberger65}.
	The first order logic itself gave birth to its own zoo of fragments.
	These were defined using syntactical restrictions such as limiting the number of variables,
	or by enrichment of its signature.
	A fragment $\cF$ with signature $\sigma$ will  be denoted $\cF[\sigma]$ and will refer to the formulas as well as the class of languages it
	defines.

	We first define the different signatures that will appear through this paper,
	and then formally define the quantifier alternation hierarchies, as they form the main focus of the applications of our theorems.

\paragraph{Signatures.}
	We are interested in regular numerical predicates, which are numerical predicates that can only define regular languages.
	Simultaneously,~\cite{Straubing94} and~\cite{Pel92}  defined three sets of regular numerical predicates that can be used as a base for all the regular numerical predicates.
	The first set is the singleton order $\{<\}$ which is a binary predicate corresponding to the natural order on the positions of the input word.
	The second set is $\{\mathbf{min},\mathbf{max},\mathbf{S_k}\}$ and is called the \emph{local predicates}.
	The predicates $\mathbf{min}$ and $\mathbf{max}$ are unary predicates that are satisfied respectively on the first and last positions.
	The predicate $\mathbf{S_k}$, the $k^{\text{th}}$-\emph{successor}, is a binary predicate satisfied if the second variable quantifies the
	$k^{\text{th}}$-successor of the first one.

	\begin{example}
		The formula $\exists x\exists y \ \mathbf{min}(x)\land \mathbf{S}(x,y) \land \mathbf a(x) \land \mathbf a(y)$
		defines the regular language $aaA^*$.
	\end{example}
	We alternatively use the \emph{descriptive local predicates}.
	These predicates are of the form
	$\mathbf{a}(x+k)$ (resp. $\mathbf{a}(\min+k)$, $\mathbf{a}(\max-k)$) for $k\geq 0$, holding if the position at $x+k$ (resp. $min+k$, $max-k$) is labelled by an $a$.
	\begin{example}
		The previous formula can be rewrite by the following quantifier-free formula: $\mathbf{a}(\min)\land \mathbf{a}(\min+1)$.
	\end{example}
	Most of the time, both \emph{descriptive} and classical local predicates provides the same expressive power. However the descriptive
	predicates are proved to be more convenient for abstract fragments since they don't bound two variables together. For the sake of
	simplicity we will denote in the following by $+1$ this class of descriptive local predicates. This notation is justified thanks to
	the close relation between descriptive local predicates and the successor function.
	Also note that the presence or absence of the equality predicate is important since $\FO[+1]$ is strictly less expressive than
	$\FO[=,+1]$.

	Finally, we define, for each positive integer $d$, the \emph{modular predicates on $d$}, denoted $\MOD^d$, as the set, for $i<d$, of predicates
$\mathbf{MOD_i^d}(x)$ which are unary predicates satisfied if the position quantified by $x$ is congruent to $i$ modulo $d$, and the predicates
	$\mathbf{D_i^d}$ which are constants holding if the length of the input word is congruent to $i\bmod d$.
	We denote by $\MOD$ the union of the classes $\MOD^d$, for any positive $d$.
	\begin{example}
		The language $(A^2)^*aA^*$ is defined by the formula:
		$\exists x \ \mathbf a(x) \land\ \mathbf{MOD_0^2}(x)\enspace.$
	\end{example}

The signatures that we will consider for our fragments are unions of these three sets of regular numerical predicates, and will always contain the letter predicates.
Abusing notations, we will also write $\Reg=\{<\}\cup +1 \cup\MOD$.

\paragraph{Fragments.}
\
	A \emph{fragment of logic} $\cF[\sigma]$ with signature $\sigma$ is a set of closed formulas of $\MSO[\sigma]$
	that contains the quantifier-free formulas and that is
	closed
	under the following operations :
	\begin{description}
	\item[\textit{Conjunction}] If $\varphi$ and $\psi$ are formulas of $\cF$,
				then $\varphi\wedge \psi$ is also a formula of $\cF$.
	\item[\textit{Disjunction}] If $\varphi$ and $\psi$ are formulas of $\cF$,
				then $\varphi\vee \psi$ is also a formula of $\cF$.
	\item[\textit{Quantifier-free substitutions}] If $\varphi$ is a formula of $\cF$ and $\psi(x_1,\ldots,x_n)$ a quantifier-free subformula of $\varphi$ with free variables $x_1,\ldots,x_n$, then any formula obtained by replacing $\psi(x_1,\ldots,x_n)$ by another quantifier-free formula with the same set of free variables is also in $\cF$.
	\end{description}

	If $\cF[\sigma]$ is a fragment of logic and $\sigma'$ is a class of predicates, then the \emph{enrichment} of $\cF[\sigma]$ by $\sigma'$ is denoted by $\cF[\sigma,\sigma']$ and corresponds to the closure of $\cF[\sigma]$ under the quantifier-free substitutions, where predicates range over the signature $\sigma\cup \sigma'$.
	As a closed formula defines a language, a fragment of logic defines a class of languages. Abusing notations, we will denote by $\cF[\sigma]$ a
	fragment of logic, as well as the class of languages it recognizes.
	It is worth noting that~\cite{KL12} defined another notion of fragment of logic as sets of
	formulas closed under some syntactical substitutions ensuring algebraic characterisation of the fragment.

	The fragment $\FO^2$ is the subclass of formulas of $\FO$ using only two symbols of variables which can be reused (see Example~\ref{FO2:ex}).
	Here, the class of languages defined by $\FO^2[<]$ is strictly contained in $\FO^2[<,+1]$ and $\FO^2[<,\MOD]$ (see~\cite{TW98,DP13}).
	\begin{example}\label{FO2:ex}
	The language $A^*aA^*bA^*aA^*$ can be described by the first order formula
	$$\exists x \exists y \exists z\ x <y < z \land \mathbf{a}(x)\land \mathbf{b}(y)\land \mathbf{a}(z)\enspace.$$
	This formula uses three variables $x,y$ and $z$. However, by reusing $x$ we get an equivalent formula that uses only two variables:
	\begin{align}
	\exists x\ \mathbf{a}(x) \land \Big(\exists y\  x < y  \land \mathbf{b}(y)\land \big(\exists x \ y<x\land \mathbf{a}(x)\big)\Big)\enspace.\tag{a}\label{eq1}
	\end{align}
	\end{example}

\paragraph{Alternation hierarchies.}
	Given a first order formula, one can compute a prenex normal form using the De Morgan's laws.
	We define the \emph{quantifier alternation depth} of a formula as the number of blocks  of quantifiers $\forall$ and $\exists$ in its prenex normal form.
	For example, the formula $\exists x\exists y \forall z\ x\!<\!z\!<\!y \land \mathbf{a}(x)\land \mathbf{a}(y)\land \mathbf{c}(z)$ has a quantifier depth of $2$.
	It describes the language $A^*ac^*aA^*$.
	Then given a signature $\sigma$ and a positive integer $k$, we denote by $\BS_k[\sigma]$ the set of prenex normal formulas of $\FO[\sigma]$ whose quantifier depth is smaller or equal to $k$.
 They form the levels of the \emph{quantifier alternation hierarchy} over $\FO[\sigma]$.

	When $\sigma$ is reduced to $\{<\}$, this hierarchy is called the Straubing-Th\'erien hierarchy~\citep{Straubing81,Th81}.
	Only the first~\citep{SI75} and second~\citep{PZ14} levels are known to be decidable.
	For ${\sigma=\{<\}\cup +1}$, this hierarchy is called the Dot-Depth hierarchy~\citep{BC71}.
	The decidability of each level reduces to the decidability of the corresponding
	level of the Straubing-Th\'erien hierarchy~\citep{Straubing85}.
	In both cases, the hierarchies are known to be strict, and cover all Star-Free languages.
	In this article, we also consider the alternation hierarchy of $\FO^2$.
	To define formally the number of alternations of a formula,
	we cannot rely on the prenex normal form since the construction increases the number of variables.
	In particular, remark that $\FO^2[<]$ is equivalent to $\mathbf{\Sigma_2}[<]\cap \mathbf{\Pi}_2[<]$ which
	is a subclass of $\BS_2[<]$~\citep{DGK08}.
	 That said, the number
	of alternations is still a relevant parameter that could be defined as follows:
	Consider the parse tree naturally associated to a formula. For instance,
	\eqref{eq1} has $\exists$ as a root  and the atomic formulas as the leaves.
	In a two-variable first order formula we count the maximal number of alternations appearing on a branch, i.e. between the
	root and a leaf,
	once the negations have been pushed on to the leaves. A more precise definition
	can be found in~\cite{IW09}. We denote by $\FO^2_k[\sigma]$
	the formulas of $\FO^2[\sigma]$ that have at most $k-1$ quantifier alternations.
	The hierarchy induced by $\FO^2_k[<]$ is known to be strict~\citep{IW09} and its definability
	problem is decidable~\citep{KS12,KW12}.
	Note that the hierarchy $\FO^2_k[<,+1]$ is also known to be decidable~\citep{KL13}.

	\textbf{Remark:} The classes of formulas $\FO$ and $\FO^2$
	as well as each level of the alternation hierarchies are fragments of $\MSO$ as defined previously.

	\subsection{Varieties of languages, monoids and semigroups}
We quickly present here the fundamental notions used by the article and refer the reader to the book of~\cite{Pin97a} for a detailed approach.
A (finite) \emph{semigroup} is a finite set equipped with an associative internal law.
A semigroup with a neutral element for this law is called a \emph{monoid}.
Recall that a semigroup $S$ \emph{divides} another semigroup $T$ if $S$ is a quotient of a
subsemigroup of $T$.
This defines a partial order on finite semigroups.
Given a finite semigroup $S$, an element $e$ of $S$ is \emph{idempotent} if $ee=e$.
We denote by $E(S)$ the set of idempotents of $S$.
For any element $x$ of $S$, there exists a positive integer $n$ such that $x^n$ is idempotent. We call this element the \emph{idempotent power} of $x$ and denote it by $x^\omega$.
One can check that the application $x\to x^\omega$ is well defined.

A semigroup $S$ recognizes a language $L$ over an alphabet $A$ via a \emph{morphism}
${\eta:A^+\to S}$.
Given a regular language $L$, we can compute its \emph{syntactic semigroup} as the smallest semigroup that recognizes $L$, in the sense of division.
A subset $T$ of $S$ is an \emph{ideal} if the sets $TS$ and $ST$ are both included in $T$.
A (pseudo-)\emph{variety} of semigroups (resp. monoids) is a non empty class of finite semigroups (resp. monoids) closed under
division and finite product.
Finally, a local monoid of $S$ is a monoid of the form $eSe$ where $e$ is an idempotent of $S$.

A fragment of logic is \emph{characterized} by a variety if they recognize the same languages.
By extension, a variety $\V$ will also refer to the class of languages it recognizes.
The most famous example is the equality $\FO[<]=\A$~\citep{MP71,Schutzenberger65}, where $\A$ denotes the class of aperiodic semigroups, which are
finite semigroups that are not divided by any group.
As for $\FO[<]$, the definability problem for a fragment of logic has often been solved thanks to an algebraic characterization (\cite{SI75,Th81,TW98} for example).
This decidability is sometimes obtained through \emph{profinite equations}.
We refer the reader to~\cite{PIN09} for a survey on the profinite background. 
The algebraic characterisations of most the fragments that we consider are given in Figure~\ref{Tableau:Alg}.
\begin{figure}[h]
\begin{tabular}{|c|c|c|}
\hline\tvi
Fragment & Variety & Equations \\
\hline\tvi
$\FO[<]$ & $\A$  & $x^\omega=x^{\omega+1}$ \\
\hline\tvi
$\FO[=]$ & $\ACOM$ & $x^\omega=x^{\omega+1}$, $xy=yx$ \\
\hline\tvi
$\FO^2[<] $& $\DA$ & $(xy)^\omega= (xy)^\omega x (xy)^\omega$ \\
\hline\tvi
$\FO^1[\emptyset]$ & $\J_1$ & $x^2=x$, $xy=yx$ \\
\hline\tvi
$\mathbf{\BS}_1[<]$ & $\J$ & $y(xy)^\omega=(xy)^\omega=(xy)^\omega x$\\
\hline\tvi
$\FO^2_k[<]$ & $\V_k$ & See Example~\ref{FinRank-Ex} \\
\hline

\end{tabular}
\centering
\caption{Algebraic characterisations}\label{Tableau:Alg}
\end{figure}

\paragraph{Stability index.}
One important tool to study modular predicates is the stability index.
For a monoid morphism $\varphi : A^* \to M$, the set $\varphi(A)$ is an element of the powerset
monoid of $M$.
As such it has an idempotent power. The \emph{stability index} of a morphism
is the 	least positive integer $s$
such that $\varphi(A^s)=\varphi(A^{2s})$. This set forms a subsemigroup called
the \emph{stable semigroup} of
$\varphi$.
The set $\varphi((A^s)^*)$ is called the \emph{stable monoid}
of $\varphi$.
The stable monoid (resp. semigroup) of a regular language is the stable monoid (resp. semigroup) of its syntactic morphism.

\section{Adding finitely many modular predicates}\label{Section:Finite}

We consider here the question of adding the modular predicates associated to a finite set of congruences.
First, let us remark that if
$\cF[\sigma]$ is a fragment of logic and $d$ and $p$ are two positive integers,
then $\cF[\sigma,\MOD^d,\MOD^p]\subseteq\cF[\sigma,\MOD^{dp}]$. This can be proved by some basic arithmetic reasoning and some quantifier-free substitutions.
Then as a formula only uses a finite number of modular predicates, for any language of $\cF[\sigma,\MOD]$, there exists an integer $d$ such that it belongs to $\cF[\sigma,\MOD^d]$.
The consequence is that adding a finite set of modular predicates is equivalent to adding the predicates relating to one specific congruence.
The remainder of this section deals with this question.

\subsection{Alphabet enriched by modular counting}

In order to deal with modular predicates, we now define enriched modular alphabets. These notions come naturally in the context of \emph{wreath product} and instantiated for instance in~\cite{DP13,DP15}.
We now fix a positive integer $d$ and an alphabet $A$. Let $\Zd$ be the cyclic group of order $d$.
	\begin{definition}[Enriched alphabet]
		We call the set $\Ae=A\times \Zd$
		the \emph{enriched alphabet} of $A$,
		and we denote by $\pi_d:\Ae^*\to A^*$ the projection defined by $\pi_d(a,i)=a$
		for each $(a,i)\in \Ae$.
		For example,
		the word $(a,2)(b,1)(b,2)(a,0)$ is an enriched word of $abba$ for $d=3$.
		We say that $abba$ is the \emph{underlying word} of $(a,2)(b,1)(b,2)(a,0)$.
	\end{definition}
	\begin{definition}[Well-formed words]
		A word $(a_0,i_0)(a_1,i_1)\dotsm(a_n,i_n)$ of $\Ae^*$ is \emph{well-formed} if
		for $0\leq j\leq n$, $i_j = j \bmod d$. We denote by $\WF_d$ the set of
		all well-formed words of $\Ae^*$.
		We also note $\Ae(i,j)$ the set of well-formed factor such that the first letter is labelled by $i$ and the last by $j$.

		For any $i<d$, let $\alpha_d^i: A^*\to \Ae^*$ be the function defined
		for any word $u=a_0a_1\dotsm a_n\in A^*$
		by $\alpha_d^i(u)=(a_0,i)
		(a_1,i+1\bmod d)\dotsm(a_n,i+n\bmod{d})$.
		We simply denote $\alpha_d^0$ by $\alpha_d$ and the word $\alpha_d(u)$ is called the \emph{well-formed word attached} to u.
	\end{definition}

		Note that the restriction of $\pi_d$ to the set of well-formed words is one-to-one.
		For instance, the enriched word $(a,0)(b,1)(b,2)(a,0)$ is a well-formed word for $d=3$.
		It is the unique well-formed word having the word
	$abba$ as underlying word. Finally, given a language $L$, we write $L_d=\pi_d^{-1}(L)\cap \WFp{d}$.

\subsection{A first transfer theorem}
Using the enriched alphabet and the well-formed words, the next theorem links a fragment with its enrichment by congruences modulo one integer.
It transfers the expressiveness of modular predicates to the alphabet.
An aware reader could notice that it is very similar to the wreath product principle of varieties. It is in fact not a coincidence, since
this operation matches with a wreath product by the \emph{length-multiplying} variety ${\MODV}$ (see~\cite{CPS06b} for more details).
\begin{theorem}\label{semimod}
			Let $\cF[\sigma]$ be a fragment of logic,
			$L$ a regular language and $d$ a positive integer.
			Then the following properties are equivalent:
			\begin{conditions}
				\item\label{thmsemimod:1} $L$ is definable by a formula of $\cF[\sigma,\MOD^d]$,

				\item\label{thmsemimod:3} there exists
				      some languages $L_0,\ldots,L_{d-1}$ of
				      $\cF[\sigma]$ over $\Ae^*$ such that:
				      \begin{equation}\label{equationLd}
				      L =\bigcup_{i=0}^{d-1}\big( (A^d)^*A^i\cap \pi_d(L_i\cap \WF_d)\big)
				      	      \tag{a}
				      \end{equation}
			\end{conditions}
		\end{theorem}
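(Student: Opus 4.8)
The plan is to read the statement as a syntactic translation that pushes the residue information carried by $\MOD^d$ into the second coordinate of the enriched alphabet $\Ae$, and to certify the translation by a single semantic lemma. Write $\Theta$ for the map sending a quantifier-free formula over $(\Ae,\sigma)$ to a quantifier-free formula over $(A,\sigma,\MOD^d)$ that replaces every enriched letter predicate $(a,k)(x)$ by $\mathbf{a}(x)\wedge\mathbf{MOD}_k^d(x)$ and leaves the predicates of $\sigma$ untouched. The lemma, proved by a routine structural induction on formulas (the quantifiers range over the same set of positions, since $\alpha_d$ changes labels but not positions), states that for every $u\in A^*$, every formula $\theta$ over $(\Ae,\sigma)$, and every assignment of its free variables, $\alpha_d(u)\models\theta$ holds iff $u\models\Theta(\theta)$. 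This is exactly because on the well-formed word $\alpha_d(u)$ position $x$ carries the residue $x\bmod d$, so $(a,k)(x)$ reads as ``the underlying letter is $a$ and $x\equiv k\bmod d$''.

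For \ref{thmsemimod:1}$\Rightarrow$\ref{thmsemimod:3}, I would start from $\varphi\in\cF[\sigma,\MOD^d]$ defining $L$ and, for each $i<d$, let $\hat\varphi_i$ be obtained from $\varphi$ by replacing $\mathbf{a}(x)$ with $\bigvee_{k<d}(a,k)(x)$, replacing $\mathbf{MOD}_j^d(x)$ with $\bigvee_{a\in A}(a,j)(x)$, and replacing the length constant $\mathbf{D}_j^d$ with $\top$ if $j=i$ and $\bot$ otherwise. By the definition of enrichment, $\varphi$ shares its quantifier skeleton with a formula of $\cF[\sigma]$; since $\hat\varphi_i$ keeps that skeleton and only modifies quantifier-free subformulas, now over $(\Ae,\sigma)$, closure under quantifier-free substitution places $\hat\varphi_i$ in $\cF[\sigma]$ over $\Ae$. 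Setting $L_i$ to be its language, a direct check of the three replaced atoms shows that for every $u$ with $|u|\equiv i\bmod d$ one has $u\models\varphi$ iff $\alpha_d(u)\models\hat\varphi_i$, i.e. iff $u\in\pi_d(L_i\cap\WF_d)$; here the restriction $|u|\equiv i$ is precisely what turns $\mathbf{D}_j^d$ into the constant $[\,j=i\,]$. As such a $u$ lies in $(A^d)^*A^i$ and in no other block, equation \eqref{equationLd} follows.

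For the converse, take $\psi_i\in\cF[\sigma]$ defining $L_i$ and form $\tilde\psi_i=\Theta(\psi_i)$; transporting the skeleton of $\psi_i$ to the alphabet $A$ and filling it with the $\Theta$-images of its quantifier-free parts, which now use $\MOD^d$, shows $\tilde\psi_i\in\cF[\sigma,\MOD^d]$. The lemma gives $u\in\pi_d(L_i\cap\WF_d)$ iff $u\models\tilde\psi_i$. To reassemble the blocks I would set $\varphi=\bigvee_{i<d}\big(\mathbf{D}_i^d\wedge\tilde\psi_i\big)$, which lies in $\cF[\sigma,\MOD^d]$ because each $\mathbf{D}_i^d$ is quantifier-free, hence already in the fragment, and the fragment is closed under conjunction and disjunction. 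For any word $u$ exactly one disjunct can fire, namely $i=|u|\bmod d$, so $\varphi$ defines the right-hand side of \eqref{equationLd}, which equals $L$.

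The delicate point is not the arithmetic but keeping every manipulation inside the fragment: the only admissible moves are conjunction, disjunction, and quantifier-free substitution, so I must perform all rewriting at the level of quantifier-free subformulas with unchanged free variables, and inject the length information solely through the constants $\mathbf{D}_i^d$ and the blocks $(A^d)^*A^i$ rather than through any new quantification. What makes this possible is that modular predicates are unary, and therefore can be merged with the equally positional letter predicates; I would single out this absorption, together with the matching between the offset of $\alpha_d$ and the residue convention of $\mathbf{MOD}_j^d$, as the crux that the structural induction must verify.
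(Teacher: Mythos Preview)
Your proposal is correct and follows essentially the same route as the paper: the same atom-level substitutions $(a,k)(x)\leftrightarrow \mathbf{a}(x)\wedge\mathbf{MOD}_k^d(x)$ and $\mathbf{MOD}_j^d(x)\leftrightarrow\bigvee_{a}(a,j)(x)$, the same elimination of the length constants $\mathbf{D}_j^d$ by splitting into the $d$ residue classes, and the same reassembly via $\bigvee_i \mathbf{D}_i^d\wedge\tilde\psi_i$. The only cosmetic difference is that the paper first isolates the $\mathbf{D}_j^d$'s by a separate normal-form lemma and then performs the letter/modular substitutions, whereas you fold both steps into the single map $\varphi\mapsto\hat\varphi_i$; your version is in fact slightly more explicit about why each manipulation stays in the fragment.
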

		To prove the result,
 		we need an auxiliary result which gives a decomposition of the language defined by a formula into smaller pieces.

	\begin{lemma}\label{formNorm}
	Let $\Fsigmod$ be a fragment of logic and $\varphi$ a formula of $\Fsigmodd$.
	Then there exists $d$ formulas $\psi_i$ of $\Fsigmodd$ that do not contain any predicate $D_j^d$
	and such that $$\varphi\equiv \bigvee_{i=0}^{d-1}(\psi_i\wedge D_i^d).$$
	Moreover, we have:  
	 $$L(\varphi)=\bigcup_{i = 0}^{d-1}  \big((A^d)^*A^i\cap L(\psi_i)\big).$$
	\end{lemma}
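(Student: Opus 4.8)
The plan is to exploit that the constants $D_j^d$ are pure \emph{length predicates}: on any word $u$ the constant $D_j^d$ holds if and only if $j \equiv |u| \pmod d$. Hence on a fixed word exactly one of $D_0^d,\dots,D_{d-1}^d$ is true, and the truth value of each of them does not depend on any assignment of the (quantified) variables occurring elsewhere in $\varphi$. This is what lets me split $A^*$ into the $d$ residue classes by length and hard-code the truth values of the $D$-predicates separately on each class.

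First I would define, for each $i\in\{0,\dots,d-1\}$, the formula $\psi_i$ obtained from $\varphi$ by replacing every occurrence of the subformula $D_i^d$ by a quantifier-free tautology $\top$ and every occurrence of $D_j^d$ with $j\neq i$ by a quantifier-free contradiction $\bot$. Each $D_j^d$ is an atomic quantifier-free subformula with no free variables, and $\top,\bot$ are quantifier-free formulas with no free variables; so each single replacement falls under the closure of $\Fsigmod$ under quantifier-free substitutions, and iterating over the finitely many occurrences keeps us inside the fragment. Thus $\psi_i\in\Fsigmodd$, its signature still lies in $\sigma\cup\MOD^d$, and by construction it contains no predicate $D_j^d$.

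Next I would verify the equivalence $\varphi\equiv\bigvee_{i=0}^{d-1}(\psi_i\wedge D_i^d)$ pointwise. Fix a word $u$ and let $r=|u|\bmod d$; then $D_r^d$ is true on $u$ while $D_j^d$ is false on $u$ for every $j\neq r$. Since these predicates are constants, substituting them by their actual truth values on $u$ produces a formula equivalent to $\varphi$ on $u$, and that substituted formula is exactly $\psi_r$. Because the conjunct $D_i^d$ singles out only $i=r$, the right-hand disjunction evaluated at $u$ collapses to $\psi_r\wedge D_r^d$, which holds on $u$ iff $\psi_r$ does, i.e. iff $\varphi$ does; this gives the equivalence for all $u$.

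Finally, for the \emph{moreover} part I would translate this into languages. The constant $D_i^d$ defines exactly the words whose length is congruent to $i$ modulo $d$, namely $(A^d)^*A^i$, so $L(\psi_i\wedge D_i^d)=L(\psi_i)\cap (A^d)^*A^i$; taking the union over $i$ and using that disjunction corresponds to union of languages yields $L(\varphi)=\bigcup_{i=0}^{d-1}\big((A^d)^*A^i\cap L(\psi_i)\big)$. The argument is essentially routine, and the only genuinely delicate points are bookkeeping ones: checking that each replacement is a legitimate quantifier-free substitution (which is exactly why one leans on the $0$-ary, variable-free nature of the $D_j^d$), and ensuring $\top$ and $\bot$ are available as $D$-free quantifier-free formulas so that the resulting $\psi_i$ genuinely lies in $\Fsigmodd$ and is free of the $D$-predicates.
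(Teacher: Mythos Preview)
Your proof is correct and follows exactly the same approach as the paper: define $\psi_i$ by replacing each $D_i^d$ by \emph{true} and each $D_j^d$ with $j\neq i$ by \emph{false}, observe that this stays in the fragment by closure under quantifier-free substitutions, and conclude using that $D_i^d$ defines $(A^d)^*A^i$. Your write-up is in fact more careful than the paper's (which is quite terse), in that you explicitly justify why the replacement is a legitimate quantifier-free substitution and spell out the pointwise verification of the equivalence.
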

	\begin{proof}
	For $i<d$, we define the formula $\psi_i$ to be the formula $\varphi$ where we replaced
		every predicate $D_i^d$ by \emph{true} and every $D_j^d$ with $j\neq i$ by \emph{false}.
		One should notice that, by definition of a fragment, the formulas
		$\psi_i$ are in $\Fsigmodd$.
		We can conclude the proof since the formula $(D^d_i)$
		recognizes the language $(A^d)^*A^i$.
	\end{proof}
%

		\begin{proof}[of theorem~\ref{semimod}]
		Let $\varphi$ be a formula of $\Fsigmod$. Then $\varphi$ belongs to $\Fsigmodd$ for some $d>0$.
		Using Lemma \ref{formNorm}, we know it is sufficient to consider a formula
		$\varphi$ without any length predicate.
		We transform it into a formula $\psi$ by doing the following transformation:
		\begin{align*}
			 \MOD_i^d(x) \text{ is replaced by } \bigvee_{a\in A} \mathbf{(a,i)}(x),\\
			 	\mathbf a(x)\text{  is replaced by }\bigvee_{0\leq i<d} \mathbf{(a,i)}(x).
		\end{align*}
		The resulting formula
		$\psi$ is in $\Fsig(A_d^*)$ and $L(\varphi)=\pi_d(L(\psi)\cap \WF_d)$.
		Conversely, we transform a formula $\psi$ of $\Fsig(A_d^*)$ into a formula $\varphi$ of $\Fsigmodd$ by
		replacing every predicate $\mathbf{(a,i)}(x)$ in $\psi$ by $\mathbf a(x)\wedge \MOD_i^d(x)$.
		We also get $L(\varphi)=\pi_d(L(\psi)\cap \WF_d)$.
	\end{proof}

	The previous theorem provides a semantic counterpart to the action of adding modular predicates to a fragment of logic. In the case where
	the fragment is \emph{expressive enough}, this counterpart provides a transfer of decidability, as stated in the next
	corollary.
	\begin{corollary}[The transfer result]\label{Cor:transfer}
		Let $\cF[\sigma]$ be a fragment of logic. If $\cF[\sigma]$ is decidable and if both $\WF_d$ and $\textbf{max}$
		are
		definable in
		$\cF[\sigma]$, then
		$\cF[\sigma,\MOD^d]$ is decidable.
	\end{corollary}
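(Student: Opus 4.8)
The plan is to reduce the definability problem for $\cF[\sigma,\MOD^d]$ over $A^*$ to the definability problem for $\cF[\sigma]$ over the enriched alphabet $\Ae$, which is decidable by hypothesis. Concretely, given a regular language $L\subseteq A^*$, I would consider the language $L_d=\pi_d^{-1}(L)\cap\WF_d$ over $\Ae^*$ and establish the equivalence
\[
L\in\cF[\sigma,\MOD^d]\quad\Longleftrightarrow\quad L_d\in\cF[\sigma].
\]
Since $\pi_d$ is a letter projection and $\WF_d$ is a recognizable length-congruence constraint, $L_d$ is an effectively computable regular language over $\Ae^*$; combined with the assumed decidability of $\cF[\sigma]$, the equivalence yields the decision procedure at once (compute $L_d$, test whether it lies in $\cF[\sigma]$).

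For the implication from right to left, I would feed the decomposition of Theorem~\ref{semimod} with the constant choice $L_i=L_d$ for every $i$. Because $\pi_d$ is one-to-one on well-formed words and $L_d\subseteq\WF_d$ with $\pi_d(L_d)=L$, and because $\bigcup_{i=0}^{d-1}(A^d)^*A^i=A^*$, the right-hand side of~\eqref{equationLd} collapses to $L\cap A^*=L$. Hence $L$ is definable in $\cF[\sigma,\MOD^d]$.

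For the converse, Theorem~\ref{semimod} supplies languages $L_0,\dots,L_{d-1}\in\cF[\sigma]$ realizing~\eqref{equationLd}, from which I would reconstruct $L_d$. As the factors $(A^d)^*A^i$ partition $A^*$ by length modulo $d$ and $\pi_d$ is bijective on $\WF_d$, a well-formed word $w$ with $|w|\equiv i\pmod d$ satisfies $\pi_d(w)\in L$ precisely when $w\in L_i$, whence
\[
L_d=\bigcup_{i=0}^{d-1}\big(L_i\cap\WF_d\cap M_i\big),
\]
where $M_i$ denotes the words of length congruent to $i$ modulo $d$. This is where the two hypotheses are used: inside $\WF_d$ the constraint $M_i$ reduces to the maximal position carrying second component $(i-1)\bmod d$, a condition expressible once $\mathbf{max}$ is definable in $\cF[\sigma]$, while $\WF_d$ is definable by assumption. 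Invoking the closure of a fragment under finite conjunction and disjunction, each term, and therefore $L_d$, belongs to $\cF[\sigma]$.

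The main obstacle is this converse reconstruction, and more precisely the translation of the global length-modulo-$d$ information carried by the $(A^d)^*A^i$ factors into a purely local condition on the maximal position of a well-formed word. This step is exactly what forces the hypotheses that both $\WF_d$ and $\mathbf{max}$ be definable in $\cF[\sigma]$; the rest is bookkeeping with the bijectivity of $\pi_d$ on $\WF_d$ and the Boolean closure properties of fragments.
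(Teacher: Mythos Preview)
Your proposal is correct and follows essentially the same route as the paper: both reduce the problem to the equivalence $L\in\cF[\sigma,\MOD^d]\iff L_d\in\cF[\sigma]$, arguing that condition~\eqref{thmsemimod:3} of Theorem~\ref{semimod} collapses to the single requirement $L_d\in\cF[\sigma]$ once $\WF_d$ and $\mathbf{max}$ are available. The paper's proof is merely a sketch of this; your write-up supplies the details (the choice $L_i=L_d$ for one direction, and the reconstruction $L_d=\bigcup_i(L_i\cap\WF_d\cap M_i)$ using $\mathbf{max}$ on the enriched alphabet for the other), which is precisely what the paper leaves implicit.
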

	\begin{proof}
	The result comes from the fact that if $\textbf{max}$ is definable, then using modular predicates the languages $(A^d)^*A^i$ are definable in $\cF[\sigma,\MOD^d]$.
	If furthermore we can define the language of well-formed words, then item~\ref{thmsemimod:3} of Theorem~\ref{semimod} is equivalent to the language $L_d$ being definable in $\cF[\sigma]$ over the enriched alphabet.
	This language being computable from $L$, we get decidability.
	\end{proof}
	\textbf{Remark:} Corollary~\ref{Cor:transfer} applies to fragments $\BS_k[\sigma]$, $\FO[\sigma]$, when $k\geq 2$ and $\sigma$ contains either $+1$ or the order. It
	also applies to fragments $\BS_1[\sigma]$, $\FO^2_k[\sigma]$,  or $\FO^2[\sigma]$ when
	$+1$ is contained in $\sigma$.

\section{Main results}\label{Section:Main}

As stated in the previous section, any language defined by a fragment with modular predicates can be done so with a formula using only congruences to one specific integer.
In fact, there exists an infinite number of such witnesses.
The remaining of the article is dedicated to the problem of deciding one witness, given a language.
We call it the \emph{delay problem} and can be explicitly stated as follows:\\
\noindent\textbf{The delay question:}
Given a regular language $L$ and a fragment $\cF[\sigma]$, is it possible to compute an integer $d$ such that $L$ belongs to $\cF[\sigma,\MOD]$ \emph{if, and only if}, it belongs to $\cF[\sigma,\MOD^{d}]$?\\

\noindent
Remark that such an integer $d$ could depend of $L$ and $\cF[\sigma]$. The denomination stems from the Delay Theorem of~\cite{Straubing85} that solves a similar question for the enrichment
by the
successor predicate.
Section~\ref{Section:Delay} is devoted to solve the delay problem for different classes of varieties.
It relies heavily on algebraic notions, in particular the framework of categories.
We present here the main applications to fragments of logic, which are summed up in Figure~\ref{TableauFinal}.
This figure does not include decidability of the smaller fragments of $\FO$ equipped with modular predicates: $\FO[\MOD]$ (Theorem~\ref{Main:local}), $\FO[+1,\MOD]$ (Theorem~\ref{Main:local2}), $\FO[=,\MOD]$ (Theorem~\ref{Main:finiterank}) and $\FO[=,+1,\MOD]$ (Theorem~\ref{Main:InfTest}).
\newcommand{\newresult}{\multirow{2}{60pt}{\textbf{New result}}}
\begin{figure}

\centering

\begin{tabular}{|c|c|c|c|}
\hline
\tvi & $[<]$ & $[<,\MOD]$ & $[\Reg]$ \\
\hline 
\tvi \multirow{2}{60pt}{ $\BS_1=\FO^2_1$} & \cite{SI75} & \multirow{2}{110pt}{\cite{CPS06b}}& \multirow{2}{100pt}{ \cite{MPT00}} \\
&\cite{TO82}&&\\
 \hline
\tvi \multirow{2}{60pt}{ $\FO^2_k$} & \cite{KS12} &  \newresult &  \newresult\\
\tvi & \cite{KW12} & & \\
 \hline
 
 \tvi \multirow{2}{60pt}{\hfil $\FO^2$} &  \multirow{2}{120pt}{\cite{TW98}} & \multirow{2}{130pt}{\cite{DP13}} &\newresult \\
 \tvi & & & \\
 \hline
 
  \tvi \multirow{2}{60pt}{\hfil $\BS_2$} &  \multirow{2}{110pt}{\cite{PZ14}} & \newresult & \newresult \\
  \tvi & & & \\
 \hline
 
 \tvi \multirow{2}{60pt}{\hfil $\BS_k$} & \multirow{2}{40pt}{Open }& Reduces to $[<]$ & Reduces to $[<]$\\
\tvi & & \textbf{New result}& \textbf{New result} \\
\hline 

 \tvi \multirow{2}{60pt}{\hfil $\FO$} & \cite{MP71} &   \multirow{2}{75pt}{\cite{Straubing94}}  &  \multirow{2}{110pt}{\cite{Bar92}} \\
\tvi & \cite{Schutzenberger65} &  & \\
\hline 
\end{tabular}
\caption{Decidability results of first-order fragments}\label{TableauFinal}
\end{figure}

The first decidability results comes from the local property.
Although it does not bring many new results, mainly reproving~\cite{Bar92} and~\cite{DP13}, it gives a unified proof for these fragments.
\emph{Local} varieties have a particular role in the previous work of Straubing, where they are identified as varieties that \emph{behave}
gently compared toward $+1$. In the context of modular predicates, they also have this good property that allows us to state a
fairly generic statement under this assumption. A formal definition of locality can be found in Section~\ref{Subsection:Local}.
\begin{theorem}[Local case, for monoids varieties]\label{Main:local}
	Let $\cF[\sigma]$ be a fragment equivalent to a local variety $\V$.
Now let $L$ be a regular language and $s$ its stability index, then the following statements are equivalent.
\begin{itemize}
	\item  $L$ belongs to $\cF[\sigma,\MOD]$.
	\item  $L$ belongs to $\cF[\sigma,\MOD^s]$.
	\item  the stable monoid of $L$ belongs to $\V$.
\end{itemize}
Furthermore, if $\cF[\sigma]$ is decidable, then so is $\cF[\sigma,\MOD]$.
\end{theorem}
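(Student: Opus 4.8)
The plan is to prove the three statements equivalent by establishing the cycle ``second $\Rightarrow$ first $\Rightarrow$ third $\Rightarrow$ second'', and then to read off decidability from the equivalence with the stable monoid. Write $\varphi$ for the syntactic morphism of $L$. The implication from the second statement to the first is immediate, since $\MOD^s\subseteq\MOD$ yields $\cF[\sigma,\MOD^s]\subseteq\cF[\sigma,\MOD]$. Throughout I would use two facts freely: the characterisation of $\cF[\sigma]$ by $\V$, so that for a language $K$ over any alphabet $K\in\cF[\sigma]$ if and only if its syntactic monoid lies in $\V$; and Theorem~\ref{semimod} together with Corollary~\ref{Cor:transfer}, which reduce membership of $L$ in $\cF[\sigma,\MOD^d]$ to definability of the enriched language $L_d=\pi_d^{-1}(L)\cap\WF_d$ in $\cF[\sigma]$ over $\Ae$.

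For the first statement implying the third, I would begin by enlarging the modulus. If $L\in\cF[\sigma,\MOD]$ then $L\in\cF[\sigma,\MOD^d]$ for some $d$, and since $\cF[\sigma,\MOD^d]\subseteq\cF[\sigma,\MOD^{d'}]$ for $d'=\mathrm{lcm}(d,s)$, I may assume that $d$ is a multiple of $s$. Then $L_d$ is definable in $\cF[\sigma]$, so its syntactic monoid lies in $\V$. The point is that the stable monoid of $L$ divides this syntactic monoid: the well-formed words whose length is a multiple of $d$ induce loops in the syntactic structure of $L_d$, and since $d$ is a multiple of $s$ one has $\varphi((A^d)^*)=\varphi((A^s)^*)$, the stable monoid. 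Closure of $\V$ under division then gives the third statement; note that this direction uses only that local monoids divide the recognising monoid, not full locality.

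The delicate implication is the third implying the second, and this is where finite categories and the locality of $\V$ enter. I would associate to the syntactic morphism of $L$ and to length counting modulo $s$ a finite category $\mathcal{C}$ whose objects are the residues of $\Z_s$ and whose loops at a given object are exactly the transformations induced by words of length a multiple of $s$; a direct computation shows that each local monoid of $\mathcal{C}$ is isomorphic to the stable monoid of $L$. Assuming the third statement, all local monoids of $\mathcal{C}$ lie in $\V$, that is $\mathcal{C}\in\lV$. Since $\V$ is local we have $\lV=\gV$, so $\mathcal{C}$ divides a monoid of $\V$; tracing this division through the recognition of $L_s$ by $\mathcal{C}$ shows that the syntactic monoid of $L_s$ belongs to $\V$, i.e.\ $L_s\in\cF[\sigma]$, and Theorem~\ref{semimod} then delivers $L\in\cF[\sigma,\MOD^s]$. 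I expect the construction of $\mathcal{C}$ and the passage from $\mathcal{C}\in\gV$ back to the syntactic monoid of $L_s$ to be the main obstacle, since it rests on the description of the global of a variety developed in Section~\ref{Subsection:Local}.

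Finally, for decidability, the three equivalent statements show that $L\in\cF[\sigma,\MOD]$ if and only if the stable monoid of $L$ lies in $\V$. The stable monoid is effectively computable from $L$, by computing the syntactic morphism, the stability index $s$, and the image $\varphi((A^s)^*)$. As $\cF[\sigma]$ is decidable and characterised by $\V$, membership of a given finite monoid in $\V$ is decidable, and hence so is membership of $L$ in $\cF[\sigma,\MOD]$.
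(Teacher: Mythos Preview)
Your overall architecture is exactly the paper's: build the derived category $\mathcal{C}=C_s(L)$, observe via Lemma~\ref{monLocaux} that its local monoids are the stable monoid, use locality $\lV=\gV$, and close the cycle. This is precisely the content of Theorem~\ref{QVlV} together with Theorem~\ref{thm:derivedcategory}.

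The gap is in the logic--algebra bridge you choose. You repeatedly rely on the equivalence ``$L\in\cF[\sigma,\MOD^d]$ iff $L_d\in\cF[\sigma]$'', citing Corollary~\ref{Cor:transfer}. But that corollary needs $\WF_d$ and $\mathbf{max}$ to be definable in $\cF[\sigma]$, and Theorem~\ref{Main:local} does not assume this. Concretely, take $\cF[\sigma]=\FO^1[\emptyset]$, corresponding to the local variety $\J_1$: with $A=\{a\}$ and $L=(aa)^*$ one has stability index $s=2$, stable monoid trivial (so in $\J_1$), and $L\in\FO^1[\emptyset,\MOD^2]$, yet $L_2=((a,0)(a,1))^*$ is not definable in $\FO^1[\emptyset]$ over $A_2$ since $(a,0)(a,1)$ and $(a,1)(a,0)$ have the same alphabet. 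So both your ``first $\Rightarrow$ third'' step (syntactic monoid of $L_d$ need not be in $\V$) and your ``third $\Rightarrow$ second'' step (you cannot conclude $L_s\in\cF[\sigma]$) fail as written.

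The fix is to use Theorem~\ref{thm:derivedcategory} instead: it gives $L\in\cF[\sigma,\MOD^d]\Leftrightarrow C_d(L)\in\gV$ with no extra hypotheses on the fragment. Your cycle then becomes: from $L\in\cF[\sigma,\MOD]$ get $C_d(L)\in\gV\subseteq\lV$, whence the local monoid $\eta_L((A^d)^*)\in\V$, and its submonoid $\eta_L((A^{ds})^*)=\eta_L((A^s)^*)$ (the stable monoid) is in $\V$; conversely, stable monoid in $\V$ gives $C_s(L)\in\lV=\gV$, hence $L\in\cF[\sigma,\MOD^s]$ directly. This is exactly the paper's route through Theorem~\ref{QVlV}, and it avoids the detour through $L_d$ altogether.
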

Example of interest includes  $\FO^1[\emptyset]$, ${\FO^2[<]}$ or $\FO[<]$, which are equivalent to $\J_1$, $\DA$ and $\A$ respectively.
The locality of $\J_1$ and $\A$ can be found in the article of~\cite{Tilson}, the locality of $\DA$ is slightly more intricate (see~\cite{Almeida96}).

When the initial variety is local, we can nest our approach with the one with the successor predicates. It is no longer needed to
use the intricate framework of categories since in this case, we can apply Corollary~\ref{Cor:transfer} to slightly simplify the question.
\begin{theorem}[Local case, for semigroups varieties]\label{Main:local2}
Let $\cF[\sigma]$ be a fragment corresponding to a local variety $\V$
Now let $L$ be a regular language and $s$ its stability index, then the following statements are equivalent.
\begin{itemize}
	\item  $L$ belongs to $\cF[\sigma,+1, \MOD]$.
	\item  $L$ belongs to $\cF[\sigma,+1, \MOD^s]$.
	\item  the local monoids of the stable semigroups belongs to $\V$.
\end{itemize}
Furthermore, if $\cF[\sigma]$ is decidable, then so is $\cF[\sigma,+1,\MOD]$.
\end{theorem}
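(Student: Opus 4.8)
The plan is to reduce Theorem~\ref{Main:local2} to the monoid case (Theorem~\ref{Main:local}) by using the local predicates together with Corollary~\ref{Cor:transfer}, so that the intricate machinery of categories can be avoided. The key observation is that the enrichment by $+1$ plays well with local varieties: the Remark following Corollary~\ref{Cor:transfer} tells us that the transfer result applies to fragments $\cF[\sigma,+1]$, since $+1$ lets us define $\textbf{max}$ and the set of well-formed words $\WF_d$. So the first step is to invoke Corollary~\ref{Cor:transfer} to see that deciding membership in $\cF[\sigma,+1,\MOD^d]$ is equivalent to deciding membership of the enriched language $L_d = \pi_d^{-1}(L)\cap \WF_d$ in $\cF[\sigma,+1]$ over the enriched alphabet $\Ae^*$, reducing the question to the already-understood enrichment by the successor.

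First I would recall the algebraic meaning of the successor enrichment. Straubing's delay theorem characterizes $\cF[\sigma,+1]$ for a local variety $\V$ in terms of the local monoids $eSe$ of the syntactic semigroup $S$ lying in $\V$; locality of $\V$ is precisely what makes the global of $\V$ coincide with the variety of categories whose local monoids are in $\V$, so that adding $+1$ to a fragment equivalent to $\V$ yields the class of languages whose syntactic semigroup has all its local monoids in $\V$. Composing this with the modular enrichment from Theorem~\ref{semimod}, the condition ``$L \in \cF[\sigma,+1,\MOD^d]$'' becomes ``every local monoid of the syntactic \emph{semigroup of $L_d$} lies in $\V$.'' The plan is then to compute that the syntactic semigroup of the enriched language $L_d$ relates to the stable semigroup of $L$: the modular refinement of the alphabet by $\Zd$ cuts the syntactic semigroup down to its $d$-stable part, and for $d$ equal to the stability index $s$ one recovers exactly the stable semigroup of $L$. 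This is what underlies the equivalence of the first two bullet points, and makes $s$ the canonical witness for the delay.

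The heart of the argument, and the main obstacle, is establishing the equivalence of the first bullet with the third, namely that membership in $\cF[\sigma,+1,\MOD]$ (with \emph{arbitrary} modular predicates) is already detected at the stability index $s$, so that no larger congruence $d$ can define $L$ if the local monoids of the stable semigroup fail to be in $\V$. The direction showing that $d = s$ suffices, and that all larger multiples of $s$ give nothing new, requires a stabilization argument: I would show that for any $d$, the local monoids of the stable semigroup of $L$ at level $d$ divide (or are divided by) those at level $s$, using that $\varphi(A^s) = \varphi(A^{2s})$ is idempotent in the powerset monoid so that passing from $s$ to any multiple does not enlarge the relevant local structure. Concretely, one checks that a local monoid $eSe$ of the stable semigroup for a congruence $d$ already appears as a local monoid of the stable semigroup for $s$, because idempotents of the stable semigroup have order divisible by $s$. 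The delicate point is handling congruences $d$ that are \emph{not} multiples of $s$: here one passes to the common multiple $\mathrm{lcm}(d,s)$, uses the inclusion $\cF[\sigma,\MOD^d]\subseteq \cF[\sigma,\MOD^{ds}]$ noted at the start of Section~\ref{Section:Finite}, and then applies the stabilization to descend back to $s$.

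Finally, decidability follows formally: the stable semigroup of $L$, hence its local monoids, are computable from the syntactic semigroup of $L$, so if membership in $\V$ is decidable then the third condition is decidable, and by the established equivalences so is membership in $\cF[\sigma,+1,\MOD]$. I expect the routine parts to be the syntactic transformations of formulas (handled by Theorem~\ref{semimod} and Corollary~\ref{Cor:transfer}) and the identification of the stable semigroup, while the genuinely load-bearing step is the stabilization lemma guaranteeing that the stability index $s$ is a correct witness for \emph{all} modular enrichments at once.
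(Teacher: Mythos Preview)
Your overall strategy matches the paper's: the paper explicitly says that for the local case with the successor one can bypass categories by applying Corollary~\ref{Cor:transfer} and then working with the syntactic semigroup of $L_d$ over the enriched alphabet. The reduction to ``$L_d \in \cF[\sigma,+1]$'' and the identification of $\cF[\sigma,+1]$ with $\LV$ via locality are both exactly as in the paper (see the first lines of the proof of Proposition~\ref{prop:QLV}).

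There is, however, a genuine gap in the load-bearing step. You propose to show directly that local monoids ``at level $d$'' appear as (or divide) local monoids ``at level $s$,'' justified by the sentence ``idempotents of the stable semigroup have order divisible by $s$.'' This sentence is confused---idempotents satisfy $e^2=e$ and carry no useful notion of order---and more importantly you conflate two different objects: the syntactic semigroup $S_d$ of $L_d$ is \emph{not} the semigroup $\eta_L((A^d)^+)$. The semigroup $S_d$ has an absorbing zero coming from non-well-formed words, and its non-zero idempotents correspond to well-formed factors starting and ending at arbitrary residues in $\Z_d$, not just to the stable part. The paper's Proposition~\ref{prop:QLV} does precisely this identification: it shows that each local monoid $eS_se$ (for $e$ a non-zero idempotent) is a quotient of $fTf\cup\{0\}$ with $T$ the stable semigroup of $L$ and $f$ an idempotent of $T$, and conversely that the stable semigroup embeds as $\eta_s\big((A_s^s)^+\cap\WF_s\big)$ inside $S_s$. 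That is what gives the equivalence with the third bullet, and it is more delicate than ``one recovers exactly the stable semigroup.''

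For the delay itself (equivalence of the first two bullets), the paper does \emph{not} argue directly on local monoids. It first proves (Proposition~\ref{Prop:VD}) that any variety of the form $\V*\D$---hence $\LV$ for $\V$ local---is \emph{infinitely testable}: membership of $S$ in $\LV$ depends only on the idempotents' ideal $\tinf{S}=SE(S)S$, because $(\tinf{S})_E=S_E$. Theorem~\ref{thm:delayinftest} then establishes the Claim that $\tinf{S_s}$ divides $\tinf{S_{ds}}$ via an explicit injective morphism built from long well-formed preimages. This is the precise division doing the work you gestured at with ``descend from $ds$ to $s$.'' Your intuition is correct, but the mechanism is a division of idempotents' ideals, not an ad hoc matching of local monoids across levels; as written, your proposal does not supply that mechanism.
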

This theorem is a consequence of Proposition~\ref{prop:QLV}.
Note that both $\FO[+1,\MOD]$ and $\FO^2[<,+1,\MOD]$ fall into the scope of this theorem. In the case of full first order
logic, the successor predicate being definable with the order, the expressiveness remains unchanged. The reduction to logic of these results can be found in Subsection~\ref{SsSection:InfTest} for Theorem~\ref{Main:local2} and Subsection~\ref{Subsection:Local} for Theorem~\ref{Main:local}. Note that this provides decidability.

A generalized approach of the previous results brings fresh ones, although we fail to obtain a delay independent from the fragment.
We need to assume some properties on the \emph{varieties of categories} generated by the initial variety. In particular, we assume
that the \emph{path-equations} of the so called \emph{global} of a variety use a bounded number of vertices. Under this assumption we successfully compute a delay.
\begin{theorem}[Finite rank case]\label{Main:finiterank}
Let $\cF[\sigma]$ be a fragment corresponding to a variety $\V$ of rank $k$.
Now let $L$ be a regular language and $s$ its stability index, then the following statements are equivalent.
\begin{itemize}
	\item  $L$ belongs to $\cF[\sigma, \MOD]$.
	\item  $L$ belongs to $\cF[\sigma,\MOD^{ks}]$.
\end{itemize}
Furthermore, if $\cF[\sigma,+1]$ is decidable, then so is $\cF[\sigma,\MOD]$.
\end{theorem}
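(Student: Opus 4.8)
The plan is to dispose of the easy implication first and then reduce the hard one to a membership question about a finite category. Since $\MOD^{ks}\subseteq\MOD$ we have $\cF[\sigma,\MOD^{ks}]\subseteq\cF[\sigma,\MOD]$, so the second item trivially implies the first. For the converse, suppose $L\in\cF[\sigma,\MOD]$. As noted at the start of Section~\ref{Section:Finite}, $L\in\cF[\sigma,\MOD^d]$ for some $d>0$, and because $\cF[\sigma,\MOD^d]\subseteq\cF[\sigma,\MOD^{d'}]$ whenever $d\mid d'$, I may replace $d$ by $\mathrm{lcm}(d,ks)$ and thus assume $ks\mid d$ (in particular $s\mid d$). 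The task is then to push the definability back down from modulus $d$ to modulus $ks$.

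The key is to turn membership in $\cF[\sigma,\MOD^d]$ into an algebraic condition on a category. By Theorem~\ref{semimod}, $L\in\cF[\sigma,\MOD^d]$ is equivalent to the enriched pieces $L_i$ over $\Ae$ lying in $\cF[\sigma]=\V$, which—this is the content of the framework of Subsection~\ref{Subsection:FiniteRank}—translates into a finite category $\mathcal{C}_d$, built from the syntactic morphism $\eta$ of $L$ and the modulus $d$, belonging to the global $\gV$. The objects of $\mathcal{C}_d$ are the residues in $\Zd$, and the arrows from $i$ to $j$ are the images under $\eta$ of the well-formed factors in $\Ae(i,j)$, namely the sets $\eta(A^m)$ where $m$ ranges over the lengths in a fixed residue class modulo $d$. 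By definition of the stability index, $\eta(A^m)$ depends only on $m\bmod s$ once $m\geq s$; since $s\mid d$, the residue of an object modulo $d$ determines its residue modulo $s$, so the arrow set of $\mathcal{C}_d$ between two objects depends only on the residues of those objects modulo $s$.

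Now the rank enters. As $\V$ has rank $k$, the global $\gV$ is defined by path-equations using at most $k$ vertices, so testing $\mathcal{C}_d\in\gV$ amounts to inspecting all labelled sub-configurations on at most $k$ objects together with their connecting arrows. Because $ks\mid d$, each residue class modulo $s$ is represented by at least $k$ objects in both $\mathcal{C}_{ks}$ (exactly $k$) and $\mathcal{C}_d$, and the arrows depend only on these residues; hence $\mathcal{C}_{ks}$ and $\mathcal{C}_d$ realize exactly the same labelled configurations on at most $k$ objects. A path-equation on at most $k$ vertices therefore holds in one precisely when it holds in the other, so $\mathcal{C}_d\in\gV$ iff $\mathcal{C}_{ks}\in\gV$. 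Reading the equivalence of the previous paragraph for $d=ks$ then yields $L\in\cF[\sigma,\MOD^{ks}]$, as required.

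For decidability, the stability index $s$ is computable from the syntactic morphism of $L$, so $ks$ and the finite category $\mathcal{C}_{ks}$ are effectively obtained, and by the above it suffices to decide $\mathcal{C}_{ks}\in\gV$. Through the correspondence underlying the global—the enrichment by the successor matching the wreath product with $\D$, i.e. $\V*\D$ corresponding to $\cF[\sigma,+1]$—this membership is decidable as soon as $\cF[\sigma,+1]$ is, which explains why decidability of $\cF[\sigma,+1]$, rather than merely $\cF[\sigma]$, is assumed. I expect the main obstacle to be the middle step: setting up $\mathcal{C}_d$ so that $\mathcal{C}_d\in\gV$ captures $\cF[\sigma,\MOD^d]$ exactly, and verifying carefully that its arrow sets are genuinely a function of residues modulo $s$ only. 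Once that is secured, the counting argument that $ks$ objects suffice to witness every $\leq k$-vertex path-equation is essentially bookkeeping.
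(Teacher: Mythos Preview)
Your overall strategy matches the paper's: reduce to showing that $C_d(L)\in\gV$ forces $C_{ks}(L)\in\gV$, using that $\gV$ is cut out by path equations on at most $k$ vertices (this is exactly Theorem~\ref{delay-FinRank} together with Theorem~\ref{thm:derivedcategory}). The genuine gap is the sentence you yourself flag as the obstacle: ``the arrow set of $\mathcal{C}_d$ between two objects depends only on the residues of those objects modulo $s$.'' This is false, and it is the load-bearing step of your counting argument.

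A concrete counterexample: take $A=\{a\}$ and $\eta$ into $M=\{1,a,0\}$ with $a^2=0$. The stability index is $s=2$. In $C_4(L)$ the arrow set from $0$ to $1$ is $\{a,0\}$ (witnesses of length $1,5,9,\dots$) while from $0$ to $3$ it is $\{0\}$ (lengths $3,7,\dots$); yet $1\equiv 3\bmod s$. The point is that the stability relation $\eta(A^m)=\eta(A^{m+s})$ only holds once $m\geq s$, whereas the arrow set from $i$ to $j$ also contains $\eta(A^{(j-i)\bmod d})$ when $(j-i)\bmod d<s$, and this short-word contribution is not governed by residues modulo $s$. Hence your conclusion that $C_{ks}$ and $C_d$ ``realize exactly the same labelled configurations on at most $k$ objects'' fails, and with it the bookkeeping argument.

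The paper repairs this by an explicit positional embedding rather than a homogeneity claim. Given $\varphi:X^*\to C_{ks}(L)$ witnessing the failure of a defining equation, the $\leq k$ objects hit in $\Z_{ks}$ leave a cyclic gap of length at least $s$ (pigeonhole). One defines $\theta$ on those objects by fixing the ones before the gap and shifting those after by $ds-ks$. Each arrow $(i,m,j)$ then lifts to $(\theta(i),m,\theta(j))$ in $C_{ds}(L)$: if some witness has length $\geq s$ it can be pumped by stability to the required length modulo $ds$; if the witness has length $<s$ then the arrow cannot cross the gap, so $\theta(j)-\theta(i)=j-i$ and the same word works verbatim. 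This one-sided lifting of arrows---not equality of arrow sets---is exactly what is needed to transport the equation failure to $C_{ds}(L)$ and obtain the contradiction. Your intuition that $k$ copies of each residue class suffice is correct, but the reason is this gap-and-stretch argument, not translation invariance of the arrow sets modulo $s$.
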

Example of application of this theorem include $\FO[=]$ which is known to be equivalent to the variety of rank $2$ of aperiodic and
commutative monoids, as well as the alternation hierarchy of $\FO^2[<]$ whose $k^\text{th}$ level is of rank $2k$. This approach is detailed in Section~\ref{Subsection:FiniteRank}. In those cases, this last theorem also provides decidability by reducing to decidability of
the fragment with the successor predicate.

Finally, the next theorem provides a delay for all fragments containing the successor predicates. In particular,
it reduces the decidability of $\cF[\Reg]$ to the decidability of $\cF[<,+1]$ providing decidability for the fragment
$\FO^2_k[\Reg]$ and a reduction of the decidability of $\BS_k[\Reg]$ to the decidability of $\BS_k[<,+1]$, which itself reduces
to decidability of $\BS_k[<]$ thanks to~\cite{Straubing85}.
\begin{theorem}[Infinitely testable case]\label{Main:InfTest}
Let $\cF[\sigma]$ be a fragment corresponding to a variety $\V$ which is not a variety of groups.
Now let $L$ be a regular language and $s$ its stability index, then the following statements are equivalent.
\begin{itemize}
	\item  $L$ belongs to $\cF[\sigma,+1,\MOD]$.
	\item  $L$ belongs to $\cF[\sigma,+1,\MOD^{s}]$.
\end{itemize}
Furthermore, if $\cF[\sigma,+1]$ is decidable, then so does $\cF[\sigma,+1,\MOD]$.
\end{theorem}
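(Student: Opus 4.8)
The plan is to prove Theorem~\ref{Main:InfTest} by reducing the infinite supply of modular predicates to congruences modulo the stability index $s$, and to isolate the key combinatorial obstruction inside the \emph{infinitely testable property} of the variety $\V$. Since $\sigma$ already contains the successor predicate $+1$ (or is immediately enriched by it), Corollary~\ref{Cor:transfer} is available: provided $\WF_d$ and $\textbf{max}$ are definable, definability in $\cF[\sigma,+1,\MOD^d]$ is equivalent to definability of the enriched language $L_d=\pi_d^{-1}(L)\cap\WF_d$ in $\cF[\sigma,+1]$ over the alphabet $\Ae$. So the whole question transfers into the algebra of the enriched morphisms, and the task becomes: show that if $L_d$ is recognised by $\V$ for \emph{some} $d$, then $L_s$ already is, where $s$ is the stability index of the syntactic morphism of $L$.

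First I would set up the algebraic reformulation. Let $\eta:A^+\to S$ be the syntactic morphism of $L$ and $s$ its stability index, so $\varphi(A^s)=\varphi(A^{2s})$ in the powerset monoid and the stable semigroup is $\eta(A^s)$. The enriched language $L_d$ has a syntactic semigroup $S_d$ that sits inside a wreath product $S\ast\MODV$, matching the remark after Theorem~\ref{semimod} that adding $\MOD^d$ is a wreath product by the length-multiplying variety $\MODV$. The goal is to compare $S_d$ for a large $d$ (assume $s\mid d$, using the earlier remark $\cF[\sigma,\MOD^d,\MOD^p]\subseteq\cF[\sigma,\MOD^{dp}]$ to absorb finitely many congruences) against $S_s$. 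Because $s$ is the stability index, the subsemigroup $\eta(A^s)$ stabilises, and the extra structure created by passing from $\MOD^s$ to $\MOD^{ds}$ is confined to the cyclic-group action on positions; the letters of well-formed words in $\WF_{ds}$ that lie in the same class modulo $s$ carry identical $S$-behaviour once we are deep enough in the semigroup. The second step is therefore to exhibit a division $S_{ds}\prec S_s\ast(\text{group part})$ and argue that the group part is invisible to $\V$.

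This is exactly where the \emph{infinitely testable property} does the work, and I expect it to be the main obstacle. The hypothesis that $\V$ is \emph{not} a variety of groups is essential: it guarantees $\V$ contains a nontrivial aperiodic element (equivalently, $\J_1$ or at least the two-element semilattice lies in $\V$), which is what lets one ``test'' membership through local monoids of the stable semigroup rather than through global group structure. The plan is to prove a proposition — the analogue of Proposition~\ref{prop:QLV} invoked for Theorem~\ref{Main:local2}, but now without assuming locality — stating that for an infinitely testable $\V$, the enriched semigroup $S_d$ lies in $\V$ for some $d$ if and only if it lies in $\V$ for $d=s$. Concretely, one shows that if a forbidden divisor of $\V$ appears in $S_{ds}$, then by the pumping afforded by the group $\Z_{ds}$ acting on well-formed factors, a congruent copy of that divisor already appears among factors of length a multiple of $s$, hence in $S_s$. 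The delicate point is controlling how a single profinite equation defining $\V$ is witnessed: one must show the witness can be chosen with all relevant position-indices reduced modulo $s$, which is where ``infinitely testable'' should be formalised as: membership of a stable semigroup in $\V$ can be decided by checking finitely many identities on local pieces that are insensitive to the modulus beyond $s$.

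Finally I would assemble the equivalences and harvest decidability. Going from the second bullet (definable in $\cF[\sigma,+1,\MOD^s]$) to the first (definable in $\cF[\sigma,+1,\MOD]$) is trivial since $\MOD^s\subseteq\MOD$; the content is the converse, supplied by the proposition above. For the decidability statement, one computes $s$ from the syntactic semigroup of $L$, forms $L_s$ over $\Ae$ effectively, and applies the assumed decision procedure for $\cF[\sigma,+1]$ via Corollary~\ref{Cor:transfer}. The one routine verification I would not belabour is checking that $\WF_s$ and $\textbf{max}$ are definable in $\cF[\sigma,+1]$ under the stated hypotheses, which follows from the remark after Corollary~\ref{Cor:transfer} since $+1\subseteq\sigma$. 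I expect the genuine difficulty to be entirely concentrated in establishing the ``stability index suffices'' proposition for infinitely testable varieties, i.e.\ turning the non-group hypothesis into a concrete bound tying the needed modulus to $s$ rather than to the alternation depth or rank of the fragment.
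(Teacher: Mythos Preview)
Your overall scaffolding is right --- reduce via Corollary~\ref{Cor:transfer} to showing that $L_d\in\cF[\sigma,+1]$ for some $d$ implies $L_s\in\cF[\sigma,+1]$, and argue at the level of the syntactic semigroups $S_d$ of the enriched languages $L_d$. But the core algebraic step is pointed in the wrong direction, and the role of the two hypotheses is swapped.

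First, the division you propose goes the wrong way. You write ``exhibit a division $S_{ds}\prec S_s\ast(\text{group part})$'' and then ``if a forbidden divisor of $\V$ appears in $S_{ds}$, then \ldots\ it already appears \ldots\ in $S_s$''. That implication is $S_s\in\V*\D\Rightarrow S_{ds}\in\V*\D$, which is the \emph{trivial} direction. What must be shown is the converse: assuming $S_{ds}\in\V*\D$, conclude $S_s\in\V*\D$. The paper does this by building an \emph{injective morphism} $\tinf{S_s}\hookrightarrow\tinf{S_{ds}}$ (so $\tinf{S_s}$ divides $\tinf{S_{ds}}$), obtained by lifting long well-formed factors through the alphabet projection $h:A_{ds}^+\to A_s^+$, $(a,i)\mapsto(a,i\bmod s)$, and using the stability index to pump any preimage of $x\in\tinf{S_s}$ up to length $\ge ds$ without changing its $\eta_s$-image. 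This is the concrete construction your plan lacks.

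Second, you misplace the ``not a variety of groups'' hypothesis and the infinitely testable property. The infinitely testable property is \emph{not} something to be squeezed out of the non-group assumption on $\V$; it holds for $\V*\D$ automatically, for \emph{every} $\V$, because $(\tinf{S})_E=S_E$ and membership in $\V*\D$ is tested on $S_E$ (this is Proposition~\ref{Prop:VD}). The non-group hypothesis is used only to ensure that $\WF_d$ and $\textbf{max}$ are definable in $\cF[\sigma,+1]$, so that Corollary~\ref{Cor:transfer} applies and the reduction to $L_d$ is legitimate. Once both pieces are in place, the chain is
\[
S_{ds}\in\V*\D \Rightarrow \tinf{S_{ds}}\in\V*\D \Rightarrow \tinf{S_s}\in\V*\D \Rightarrow S_s\in\V*\D,
\]
where the outer implications use that $\V*\D$ is infinitely testable and the middle one is the division $\tinf{S_s}\prec\tinf{S_{ds}}$.
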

The condition that $\cF[\sigma]$ is not equivalent to a group variety is necessary to apply the simplification of Corollary~\ref{Cor:transfer}.
However, in the case where $\cF[\sigma]$ is indeed a variety of groups, then both $\cF[\sigma,+1]$ and $\cF[\sigma,\MOD]$ are decidable
since varieties of groups are known to be local as variety of monoids but seems intricate when both $+1$ and $\MOD$ are in the signature since groups are not local as varieties of semigroups (for instance see book~\cite[page 104]{RS09}).
The proof of this last theorem is given in Subsection~\ref{SsSection:InfTest}.

\section{Solving the Delay problem}\label{Section:Delay}

This section is devoted to solve the delay question for different classes of varieties.

We first present the framework of finite categories as well as some known results, and use it to
reduce the combinatoric characterisation of Theorem~\ref{semimod} to the decidability of the global of a variety, an algebraic notion from the framework of finite categories.

The remainder of the section then uses this characterisation to solve the delay question for different classes of varieties.
The first case is the simplest one of local varieties, where we get a clear characterisation of $\Fsigmod$.
The second case, the finite rank, is a generalisation of the local case, where an algebraic characterisation of the global is known.
Finally, the last case solves the delay for a class of varieties where little is known about the global.
It is the class of varieties of semigroups \emph{expressive enough} and satisfying an extra property:
the \emph{infinitely testable property}, which is a new notion.
%

\subsection{A derived category theorem}
\paragraph{Finite categories: a short introduction.}

		In this section, we present the theory of finite categories, as an extension of finite monoids.
		Informally, a category can be seen as a partial monoid where only some products are allowed.
		Nonetheless, notions from monoids can be correctly lifted, and we will consider varieties of categories.
		The framework of variety of categories has been successful to obtain algebraic characterizations of
		\emph{wreath products} of varieties~\citep{Tilson}.
		For example, the enrichment by modular predicates can be seen as a wreath product by a variety of morphisms.
		This comes from an adapted version of the Wreath Product Principle, that is evoked by~\cite{CPS06b}.
		We chose not to focus on this, since it would require to introduce additional definitions and proofs that are not necessary and would burden the article.

		\emph{A graph $X$} is a set of \emph{objects} denoted $\Ob(X)$
		such that for any couple of objects $(x,y)\in \Ob(X)$, we associate a set $X(x,y)$ of \emph{arrows} from $x$ to $y$.
		Two arrows	$e,f$ are coterminal if there exists $x,y\in \Ob(X)$ such that $e,f\in X(x,y)$. They are consecutive if there exists
		$x,y,z\in \Ob(X)$ such that $e\in X(x,y)$ and $f\in X(y,z)$.  An arrow $e$ is a loop from
		$x$ if $e\in X(x,x)$. A  composition law associates to each pairs of consecutive arrows, $e,f$ an arrow $ef$.
		This law is said to be associative if for any consecutive arrows $e,f,g$ we have $(ef)g=e(fg)$.

		A \emph{category} $C$ is a graph
		with an associative composition law and containing
		for each object $x$ an identity denoted $1_x$.
		Thus the set of loops around a given object, equipped with the composition law, forms a monoid,
		called the \emph{local monoid} of that object.
		Note that the terminology of local monoids of a category clashes with the terminology of local monoids of a semigroup.
		In fact, the two coincide when we consider the idempotent category of a semigroup, which is defined later.

		Here we only consider categories as
		a generalization of finite monoids, since a monoid can be viewed as a
		one-object category.
		A \emph{morphism of categories} $\eta:C\to D$ is an application
		$\eta:\Ob(C)\to \Ob(D)$ and for each pairs
		of object $(x,y)\in \Ob(C)$, an application $\eta: C(x,y) \to D(\eta(x),\eta(y))$ such that
		\begin{conditions}
			\item for any consecutive arrows $e,f$ we have $\eta(ef) = \eta(e)\eta(f)$,
			\item for any $x\in \Ob(C)$, $\eta(1_x)=1_{\eta(x)}$.
		\end{conditions}

		A \emph{division} of categories $\tau:C\to D$
		is given by a mapping $\tau:\Ob(C)\to \Ob(D)$,
		and for each pair of objects $e$ and $f$, by a relation $\tau: C(e,f)\to D(\tau(e),\tau(f))$
		such that
			\begin{conditions}
			\item $\tau(x)\tau(y)\subseteq\tau(xy)$ for consecutive arrows $x,y$,
			\item $\tau(x)\neq \emptyset$ for any arrow $x$,
			\item $1_{\tau(e)}\in\tau(1_e)$ for any object $e$ of $C$.
			\end{conditions}
	We remark that the inverse of an onto morphism of categories is a division of a categories (but the converse is not
	true).
	Then a \emph{variety of categories} is a class of categories closed under direct product and division.

	\begin{definition}
		Given a variety of monoids $\V$, the \emph{global} of $\V$, denoted $\gV$, is the class of all categories that divide a monoid of $\V$, when seen as a one-object category.
	\end{definition}

	\noindent\textbf{Remark:} Since the division of categories is a partial order and a variety is closed under product, the class of categories $\gV$ is closed by division and by product, and it is therefore a variety of categories.

	\begin{definition}[Consolidated semigroup, consolidated stamp]
			Let $C$ be a finite category and $\Arr(C)$
			the set of arrows of $C$. We denote by $C_\cd$ the semigroup
			defined on the set
			$$E=\Arr(C) \cup \{0\}$$
			with   for any $x\in E$, $0x = x0 = 0$, and for $x,y\in \Arr(C)$,
			$$x.y=\begin{cases}
					xy\text{ if }x\text{ and }y\text{ are consecutives arrows,}\\
					0\text{ otherwise}.
					\end{cases}	$$
	\end{definition}
	The following proposition is a well-known result stating that the membership of a category
	in $\gV$ reduces to the membership of $\V$ is the variety is \emph{expressive enough}. This is a
	\emph{category} version of Corollary~\ref{Cor:transfer} which means that the membership of a language to an expressive enough fragment enriched with a finite set of modular predicates reduces to the membership of a different language to the fragment without them.
	\paragraph{Background: the local predicates and derived category for locally testable language.}
	\
		In this section, we recall some known results that we will be using in the remainder of the article and give some intuitions about their significance.
		We first give the definition of the derived category for definite languages and provide
		the delay theorem of~\cite{Straubing85} as well as its improvement
		by~\cite{Tilson}.
			Let $S$ be a semigroup, $n$ an integer and $\eta:S^+\to S$ the canonical semigroup morphism
			of $S$.
			The \emph{$n$-derived category} of $S$ with respect to definite languages, denoted $D_n(S)$, is
			the category with $S^{\leq n}$ as set of objects,
			and the arrows from $u$ to $v$ are the elements $s$ of $S$ such
			that there exists a word $w\in S^+$ that $\eta(w) = s$ and  the
			suffix of size $n$ of $uw$ is equal to $v$.
			The $n$-derived category with respect to definite languages, of a regular language $L$,
			denoted $D_n(L)$, is the category $D_n(\eta_L(A^+))$.
		Finally we also introduce the \emph{idempotents' category} of a semigroup $S$, denoted by
		$S_E$ and defined by~\cite{Tilson} as follows. Its set of objects are
		the idempotents of $S$. And for $e$ and $f$ two idempotents, we set $S_E(e,f) = eSf$.
		We do not recall the definition of the wreath product of a variety $\V$ by $\D$, denoted by
		$\V*\D$.
		However, as our only use of this product is given by the following theorem, an unfamiliar reader can take
		the following theorem as a definition.

		\begin{theorem}[Delay theorem for definite languages]\label{thm:delaydefinite}
			Let $\V$ be a variety and $S$ a semigroup. The following conditions are equivalent.
			\begin{conditions}
				\item The semigroup $S$ belongs to $\V*\D$.
				\item There exists an integer $n$ such that $D_n(S)$ belongs to $\gV$.
				\item For $n=|S|$, $D_n(S)$ belongs to $\gV$.
				\item The category $S_E$ belongs to $\gV$.
			\end{conditions}
		\end{theorem}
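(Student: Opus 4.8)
The plan is to obtain all four equivalences from Tilson's Derived Category Theorem~\citep{Tilson}, specialised to the variety $\D$ of definite semigroups; this is essentially a matter of recognising the concrete categories $D_n(S)$ and $S_E$ as the derived categories attached to the relevant relational morphisms. Recall that the Derived Category Theorem asserts, for a relational morphism $\tau\colon S\to T$, that $S$ divides $\mathbf{W}\wr T$ for some $\mathbf{W}\in\V$ if and only if the derived category of $\tau$ lies in $\gV$. Since $\D=\bigcup_n\D_n$ and wreath products distribute over this union, membership in $\V*\D$ is exactly the existence of an $n$ and a relational morphism of $S$ onto an $n$-definite semigroup whose derived category is in $\gV$.

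First I would establish $(1)\Leftrightarrow(2)$. The point is that, up to division, an $n$-definite semigroup is the one that records the suffix of length $n$ of a word, and that unwinding the definition of the derived category for the canonical relational morphism of $S$ onto this suffix semigroup reproduces verbatim the category $D_n(S)$ of the statement: its objects are the suffixes $S^{\leq n}$ and its arrows track how an element of $S$, read through $\eta$, updates such a suffix. Hence $S\in\V*\D$ if and only if $D_n(S)\in\gV$ for some $n$, which is $(2)$. The implication $(3)\Rightarrow(2)$ is then immediate, as $(3)$ is the instance $n=|S|$.

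It remains to justify the uniform bound $n=|S|$ and to replace $D_n(S)$ by the fragment-free, $n$-independent category $S_E$; I would route both through condition $(4)$. For $(2)\Rightarrow(4)$ I would use that $S_E$ divides $D_n(S)$ for every $n\geq 1$: sending an idempotent $e$ to a suffix it stabilises and restricting arrows to the sets $eSf$ exhibits $S_E$ as a divisor, so $D_n(S)\in\gV$ forces $S_E\in\gV$ by closure of $\gV$ under division. The converse $(4)\Rightarrow(3)$ is the crux and the step I expect to be the main obstacle: one must show that $S_E\in\gV$ already forces $D_{|S|}(S)\in\gV$. Here the gain from $n=|S|$ comes from a pumping argument — once the remembered suffix is at least as long as $|S|$, every transition word can be factored so that each arrow of $D_{|S|}(S)$ passes through an idempotent of $S$ and is thereby governed by the arrow sets $eSf$ of $S_E$. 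Turning this factorisation into an actual division of $D_{|S|}(S)$ into a category manufactured from $S_E$ is the delicate part, and it is what transfers $\gV$-membership back. Assembling the cycle $(2)\Rightarrow(4)\Rightarrow(3)\Rightarrow(2)$ together with $(1)\Leftrightarrow(2)$ then yields the theorem.
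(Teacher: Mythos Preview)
The paper does not supply a proof of this theorem at all: it is quoted as background, attributed to \cite{Straubing85} and its refinement by \cite{Tilson}, in the paragraph introducing the derived category for definite languages. So there is no ``paper's own proof'' to compare against; the authors simply import the result.

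Your proposal is therefore not competing with anything in the paper, and what you sketch is essentially the standard Tilson argument. Identifying $D_n(S)$ as the derived category of the canonical relational morphism onto the suffix-of-length-$n$ semigroup, and then running the cycle $(2)\Rightarrow(4)\Rightarrow(3)\Rightarrow(2)$ via the divisions between $S_E$ and $D_{|S|}(S)$, is exactly how the result is obtained in \cite{Tilson}. Your assessment that $(4)\Rightarrow(3)$ is the delicate step is accurate: this is the content of Tilson's delay theorem, where the bound $n=|S|$ guarantees that every sufficiently long factor contains an idempotent, allowing one to factor arrows of $D_{|S|}(S)$ through the idempotent category. One small imprecision: in $(2)\Rightarrow(4)$ you say ``$S_E$ divides $D_n(S)$ for every $n\geq 1$'', but the division actually goes in the other direction in Tilson's treatment for general $n$; what is true and sufficient is that both $S_E$ and $D_{|S|}(S)$ sit between the same bounds in the division order, which is what the pumping argument establishes. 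This does not affect the overall correctness of the route you describe.
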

		For sufficiently expressive fragments, the operation of adding the local predicates corresponds to mapping the equivalent variety $\V$ to $\V*\D$.
		In fact, it will not be the case only if the fragment cannot use these predicates properly.
		In all cases, it is equivalent to adding the descriptive local predicates defined in Section~\ref{Section:Defs}.
		The proof of the following proposition follows the proof of Theorem~\ref{semimod}, by using an adapted
		notion of enriched alphabet. We omit the proof, that could be find in~\cite{PapermanPhd}.
	\begin{proposition}\label{Prop:localpredfragment}
			Let $\cF[\sigma]$ be a
			fragment of logic equivalent to a variety $\V$ and
			$L$ a regular language with $S$ as syntactic semigroup.
			The following conditions are equivalent.
			\begin{conditions}
				\item {$L$ is definable in $\cF[\sigma,+1]$.}
				\item $S$ belongs to $\V*\D$.
				\item $S_E$ belongs to $\gV$.
			\end{conditions}
		\end{proposition}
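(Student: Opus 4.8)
The plan is to prove the two equivalences by different means, since one of them is already available. The equivalence of the second and third conditions, namely $S \in \V*\D$ if and only if $S_E \in \gV$, is nothing but the equivalence of conditions (i) and (iv) of the Delay theorem for definite languages (Theorem~\ref{thm:delaydefinite}), applied to $\V$ and to the syntactic semigroup $S$; so no further argument is needed there. All the work lies in the equivalence between definability of $L$ in $\cF[\sigma,+1]$ and membership of $S$ in $\V*\D$, which I would obtain by transporting the proof of Theorem~\ref{semimod} from modular predicates to the descriptive local predicates.

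For the logical transfer, I would replace the modular enriched alphabet $\Ae = A \times \Zd$ by an alphabet recording local context. Since any formula of $\cF[\sigma,+1]$ uses only finitely many descriptive local predicates, there is a radius $n$ such that none of them reaches further than $n$ positions away; fix such an $n$. Each position then gets decorated with the bounded window of letters of radius $\leq n$ around it, the boundary predicates $\mathbf{a}(\min+k)$ and $\mathbf{a}(\max-k)$ and the short words being treated apart in the manner that $D^d_i$ is split off in Lemma~\ref{formNorm}. As in the modular case I would call a decorated word \emph{well-formed} when its context components are consistent with the underlying letters, write $W$ for the set of such words and $\pi$ for the projection forgetting the decoration, in analogy with $\WF_d$ and $\pi_d$. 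The substitutions of Theorem~\ref{semimod} then have exact counterparts: a predicate $\mathbf{a}(x+k)$ is replaced by the disjunction of the enriched-letter predicates whose recorded window carries an $a$ at offset $k$, and a plain letter predicate by the disjunction of the enriched letters above it. Closure of a fragment under quantifier-free substitution keeps the transformed formula inside $\cF[\sigma]$ over the enriched alphabet, and symmetrically in the other direction. This yields that $L$ is definable in $\cF[\sigma,+1]$ if and only if $L = \pi(L' \cap W)$ for some language $L'$ of $\cF[\sigma]$ over the enriched alphabet.

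It then remains to read this last condition algebraically. The decoration by a bounded context is exactly the sequential recoding that underlies the wreath product by the variety $\D$ of definite semigroups, so by the Wreath Product Principle the existence of a well-formed $\cF[\sigma]$-definable $L'$ projecting onto $L$ is equivalent to $S$ belonging to $\V*\D$; the derived category $D_n(S)$ is precisely the finite category governing which languages over the enriched alphabet are recognizable in $\V = \cF[\sigma]$, and by Theorem~\ref{thm:delaydefinite} it lies in $\gV$ exactly in that case. Composing the equivalences closes the cycle between the three conditions.

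I expect the main obstacle to be the bridge between the logical and the algebraic formulations, that is, the identification of ``there exists a well-formed $L'$ of $\cF[\sigma]$ with $L = \pi(L' \cap W)$'' with ``$D_n(S) \in \gV$'' through the Wreath Product Principle; this is where the framework of finite categories and the Delay theorem do the real work, and it is the reason the verification is only sketched here and deferred to~\cite{PapermanPhd}. A secondary but genuinely delicate point is the bookkeeping at the two ends of the word: the predicates $\mathbf{a}(\min+k)$ and $\mathbf{a}(\max-k)$, together with words shorter than the window, must be incorporated so that the enriched language remains in $\cF[\sigma]$ and still projects back to $L$, which is precisely where the closure axioms of a fragment are exercised most carefully, just as in Theorem~\ref{semimod} and Lemma~\ref{formNorm}.
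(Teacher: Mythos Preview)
Your proposal is correct and matches the paper's own treatment: the paper explicitly omits the proof, stating only that it ``follows the proof of Theorem~\ref{semimod}, by using an adapted notion of enriched alphabet'' and referring to~\cite{PapermanPhd}, which is precisely the route you sketch---the equivalence of (2) and (3) taken straight from Theorem~\ref{thm:delaydefinite}, and the equivalence of (1) and (2) obtained by replaying the quantifier-free substitution argument of Theorem~\ref{semimod} with a bounded-window enrichment in place of the modular one. Your identification of the boundary bookkeeping and the category-theoretic bridge as the points requiring care is also accurate and in line with why the paper defers the details.
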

	\paragraph{The derived category relatively to modular languages.}
		Following the preceding paragraph, we give the definition of the \emph{derived category} adapted to
		modular languages which was largely inspired by the article of~\cite{CPS06b}.

			Let $\varphi: A^*\to M$ be a morphism and $d$ an integer.
			The \emph{$d$-derived category} of $\varphi$, denoted $C_d(\varphi)$, is
			the category with $\Zd$ as set of objects,
			and the arrows from $i$ to $j$ are the elements $m$ of $M$ such
			that there exists a word $u$ satisfying $\varphi(u)=m$ and $i+|u|\equiv j \bmod d$.
			The $d$-derived category of a regular language $L$, denoted $C_d(L)$, is the category $C_d(\eta_L)$.
			The following lemma is a straightforward consequence of the definition that will be of some use.
		\begin{lemma}\label{monLocaux}
		Let $d$ be a positive integer, and $L$ be a regular language of stability index $s$.
		Then the local monoids of $C_d(L)$ are isomorphic to $\eta_L((A^d)^*)$.
		In particular, the local monoids of $C_s(L)$ are isomorphic to the stable monoid of $L$.
		\end{lemma}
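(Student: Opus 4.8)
The plan is to unwind the definition of the derived category at a fixed object and simply read off both the underlying set and the composition law of the corresponding local monoid.

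First I would fix an object $i\in\Zd$ and describe the loops around it. By definition, an arrow of $C_d(L)$ from $i$ to $i$ is an element $m\in M$ for which there is a word $u$ with $\eta_L(u)=m$ and $i+|u|\equiv i \bmod d$; the congruence condition reduces to $|u|\equiv 0 \bmod d$, that is, $u\in (A^d)^*$. Hence the underlying set of the local monoid at $i$ is exactly $\eta_L((A^d)^*)$, and crucially this set does not depend on the chosen object $i$.

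Next I would check that the composition of $C_d(L)$, restricted to these loops, is nothing but the multiplication of $M$. If $m,m'$ are loops at $i$, witnessed by $u,v\in(A^d)^*$, then their composite is the arrow $mm'$ (the product taken in $M$), witnessed by the word $uv\in(A^d)^*$, since $\eta_L(uv)=mm'$ and $i+|uv|\equiv i\bmod d$. The empty word $\varepsilon\in(A^d)^*$ yields $\eta_L(\varepsilon)=1_M$ as a loop around every object, and it serves as the identity of the composition. Therefore the local monoid at each $i$ is precisely the submonoid $\eta_L((A^d)^*)$ of $M$ equipped with the ambient multiplication; in particular all local monoids coincide and are isomorphic to $\eta_L((A^d)^*)$.

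Finally, the ``in particular'' statement follows by specialising $d$ to the stability index $s$: by definition the stable monoid of $L$ is $\eta_L((A^s)^*)$, so the local monoids of $C_s(L)$ are isomorphic to it. There is no genuine obstacle here, which is why the statement is flagged as a straightforward consequence of the definition; the only point requiring care is to confirm that the composition law of the derived category is indeed the product in $M$ and that $\varepsilon$ supplies the identity, as these are exactly the facts making $\eta_L((A^d)^*)$ closed under the operation and a monoid.
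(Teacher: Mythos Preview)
Your argument is correct and is exactly the natural unwinding the paper has in mind: it explicitly flags the lemma as ``a straightforward consequence of the definition'' and gives no proof, so there is nothing to compare against beyond noting that your reading of the arrow set and composition law matches how the paper manipulates $C_d(L)$ elsewhere (arrows written as triples $(i,m,j)$ with composition $(i,m,j)(j,m',k)=(i,mm',k)$).
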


		\begin{example}\label{Excatder}
		The $4$-derived category of the language $(aa)^*ab(bb)^*$ is given below.
		Let $\eta$ be its syntactic morphism and $S$ its stable monoid. Its stability index is $4$.
		\begin{figure}[h]
\begin{tikzpicture}[scale=0.65,->,>=latex',shorten >=1pt,node distance=1.2cm, every loop/.style={looseness=5}]
\node [state](q_0){$0$};
\node [state](q_1) at ($(q_0) + (4,-2)$) {$1$};
\node [state](q_2) at ($(q_0) + (0,-4)$){$2$};
\node [state](q_3) at ($(q_0) + (-4,-2)$){$3$};
\node (S) at ($(q_0) + (10,-1)$) {$ S_0=S = \{1,aa,bb,aabb,ba\}$};
\node (SA) at ($(q_0) + (10,-2)$) {$ S_1=S_3 = \{a,b,aab,abb \}$};
\node (SA2) at ($(q_0) + (10,-3)$) {$ S_2 = \{ aa,bb,aabb,ba\}$};
\node (SA3) at ($(q_0) + (10,-0)$) {for $0\leq i\leq 3$, 
$S_i=\eta((A^s)^*A^i) $};
   \path[->]
   (q_0) edge [bend left = 25]  node[fill = white] {$S$} (q_1)
   (q_1) edge [bend left = 25] node[fill = white] {$S_3$} (q_0)
   (q_0) edge  [bend left = 25]node[fill = white] {$S_3$} (q_3)
   (q_3) edge [bend left = 25]  node[fill = white] {$S_1$} (q_0)
   (q_2) edge [bend left = 25]  node[fill = white] {$S_1$} (q_3)
   (q_3) edge [bend left = 25]  node[fill = white] {$S_3$} (q_2)
   (q_2) edge  [bend left = 25]	 node[fill = white] {$S_3$} (q_1)
   (q_1) edge [bend left = 25]  node[fill = white] {$S_1$} (q_2);
 
   \path[<->]
   (q_1) edge  node[pos = 0.5] {$ $} (q_3)
   (q_0) edge  node[pos = 0.5, fill = white] {$ S_2$} (q_2) ;

   \path[->]
   (q_0) edge  [in = 60, out = 120, loop] node[fill = white]  {$S$} ()
   (q_2) edge  [in = -60, out = -120, loop] node[fill = white]  {$S$} ()

   (q_1) edge  [in = 30, out = -30, loop] node[fill = white]  {$S$} ()
   (q_3) edge  [in = 210, out = 160, loop] node[fill = white]  {$S$} ();

\end{tikzpicture}

\end{figure}

		\end{example}
%
%

		\begin{proposition}\label{prop:division_cat}
		Let $L$ be a regular language.
		For any $0< d\leq d'$, if $d$ divide $d'$, then  $C_{d'}(L)$ divides $C_{d}(L)$.
		\end{proposition}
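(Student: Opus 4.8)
The plan is to build an explicit division $\tau : C_{d'}(L) \to C_d(L)$, which by definition certifies that $C_{d'}(L)$ divides $C_d(L)$. Write $\varphi = \eta_L : A^* \to M$ for the syntactic morphism of $L$. On objects, I would take $\tau$ to be the canonical projection $\Z_{d'} \to \Z_d$, $i' \mapsto i' \bmod d$, which exists precisely because $d$ divides $d'$. On arrows, for each pair of objects $(i',j')$ of $C_{d'}(L)$ I would send an arrow $m$ from $i'$ to $j'$ to the singleton relation $\{m\}$ inside the arrow set of $C_d(L)$ from $\tau(i')$ to $\tau(j')$; that is, $\tau$ acts as the identity on arrows and merely collapses objects along $\Z_{d'}\to\Z_d$.

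First I would check that this is well defined, i.e. that $m$ really is an arrow of $C_d(L)$ between the projected objects. By the definition of $C_{d'}(L)$, the hypothesis that $m$ is an arrow from $i'$ to $j'$ means there is a word $u$ with $\varphi(u)=m$ and $i'+|u|\equiv j' \bmod d'$. Since $d\mid d'$, this congruence descends to $i'+|u|\equiv j' \bmod d$, hence $\tau(i')+|u|\equiv \tau(j') \bmod d$, so $m$ is indeed an arrow of $C_d(L)$ from $\tau(i')$ to $\tau(j')$. This single implication --- congruence modulo $d'$ forces congruence modulo $d$ --- is really the entire content of the statement.

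It then remains to verify the three axioms of a division. Non-emptiness is immediate, since every $\tau(m)=\{m\}$. For the multiplicativity axiom, consecutive arrows $x$ (from $i'$ to $j'$) and $y$ (from $j'$ to $k'$) compose to the product $xy$ taken in $M$; their images remain consecutive in $C_d(L)$ because $\tau$ identifies the common middle object, and $\tau(x)\tau(y)=\{xy\}=\tau(xy)$, so the required inclusion holds with equality. For the identity axiom I would observe that in either category the identity loop at every object is the neutral element $1_M$, realized by the empty word (whose length $0$ is congruent to $0$ modulo any integer); hence $\tau(1_{i'})=\{1_M\}$ contains $1_{\tau(i')}=1_M$.

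I do not expect a genuine obstacle here: the argument reduces to the routine verifications above. The only points requiring care are getting the direction of the division right --- from the finer category $C_{d'}(L)$ down to the coarser $C_d(L)$ --- and checking that identities are handled correctly, which hinges on the empty word being available since $\varphi$ is a monoid morphism on $A^*$.
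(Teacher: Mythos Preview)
Your proposal is correct and follows essentially the same approach as the paper: both send an object $i'\in\Z_{d'}$ to $i'\bmod d$ and act as the identity on arrows, with well-definedness reducing to the single observation that $d\mid d'$ makes congruences modulo $d'$ descend to congruences modulo $d$. The paper phrases the conclusion slightly differently, noting that $\tau$ is a morphism which is injective on each hom-set (the separation condition), whereas you package the same data as a singleton-valued relation and check the three listed axioms; since your singletons of distinct elements are automatically disjoint, the separation property holds trivially in your formulation as well, so the two arguments are interchangeable.
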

\begin{proof}
	Let $L$ be regular language
	and $0\leq d<d'$ be integers such that $d$ divides $d'$. We define the relation $\tau:C_{d'}(L)\to C_{d}(L)$.
			The object application $\Ob(\tau):\Z_{d'}\to \Z_{d}$ is defined by $\Ob(\tau)(x)=x\bmod{d}$
			for any $x\in \Z_{d'}$. Let $(x,m,y)$ be an arrow of $C_{d'}(L)$. By definition,
			there exists $u\in A^*$ such that
			$\eta_L(u) = m$ and $|u|\equiv y-x \bmod{d'}$.
			Let $a=x\bmod{d}$ and $b=y\bmod{d}$. Then, since $d$ divides $d'$,
			$|u|\equiv b-a\bmod{d}$. Thus,
			 the arrow
			$(a,m,b)$ is in $C_{d}(L)$.
			We define $\tau(x,m,y) = (a,m,b)$.
			The application $\tau$ is a morphism and for any
			$(x,m,y)\neq (x,m',y)$, we have
			$\tau(x,m,y)\neq \tau(x,m',y)$. Therefore,
			$\tau$ defines a division from $C_{d'}$ to $C_d$.
\end{proof}

		The derived category theorem was originally proved by~\cite{Tilson} for varieties of monoids and semigroups. Unfortunately the case of
		 modular languages can not be dealt with the framework of Tilson since they do not
		 form a variety of language. However it has been extended to \emph{length-multiplying} varieties in the PhD thesis of~\cite{ChaubPhD}. Since this work is only available in french, we provide a proof inspired by the work of Chaubard, but adapted to our framework.

		 \begin{theorem}	\label{thm:derivedcategory}
			Let $\cF[\sigma]$ be a fragment of logic equivalent to a variety of monoids $\V$,
			$L$ a regular language and $d$ a positive integer.
			Then the following properties are equivalent:
			\begin{conditions}
				\item\label{derivedcategory:1} $L$ is definable by a formula of $\cF[\sigma,\MOD^d]$,

				\item\label{derivedcategory:3} there exists
				      some languages $L_0,\ldots,L_{d-1}$ of
				      $\cF[\sigma]$ over $\Ae^*$ such that:
				      \begin{equation}\label{equationLd}
				      L =\bigcup_{i=0}^{d-1}\big( (A^d)^*A^i\cap \pi_d(L_i\cap \WF_d)\big)
				      	      \tag{a}
				      \end{equation}
				\item\label{derivedcategory:2} the category $C_d(L)$ belongs to $\gV$.
			\end{conditions}
		\end{theorem}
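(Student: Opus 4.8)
The plan is to lean on Theorem~\ref{semimod}, which already gives \ref{derivedcategory:1}$\Leftrightarrow$\ref{derivedcategory:3}, and then to close the triangle by proving \ref{derivedcategory:3}$\Leftrightarrow$\ref{derivedcategory:2}. Throughout I would write $\eta_L:A^*\to M$ for the syntactic morphism of $L$ and $F=\eta_L(L)\subseteq M$, so that $L=\eta_L^{-1}(F)$, and I would use the hypothesis that $\cF[\sigma]$ is equivalent to $\V$ in the form: a language over $\Ae^*$ lies in $\cF[\sigma]$ if and only if it is recognised by a monoid of $\V$.

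For the easier implication \ref{derivedcategory:2}$\Rightarrow$\ref{derivedcategory:3}, I would start from a division $\tau:C_d(L)\to N$ with $N\in\V$. For each enriched letter $(a,i)$ the triple $(i,\eta_L(a),i+1)$ is an arrow of $C_d(L)$, so by nonemptiness I can pick $n_{(a,i)}\in\tau(i,\eta_L(a),i+1)$ and let $\mu:\Ae^*\to N$ be the induced morphism. Since reading a well-formed word composes consecutive arrows, the inclusion $\tau(x)\tau(y)\subseteq\tau(xy)$ gives $\mu(\alpha_d(u))\in\tau(0,\eta_L(u),|u|\bmod d)$ for every $u$. Setting $P_i=\bigcup_{m\in F}\tau(0,m,i)$ and $L_i=\mu^{-1}(P_i)$, each $L_i$ is recognised by $N$, hence lies in $\cF[\sigma]$ over $\Ae^*$. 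For $u$ of length congruent to $i$ modulo $d$ one then checks $\alpha_d(u)\in L_i$ iff $u\in L$: membership is immediate when $u\in L$, while for $u\notin L$ the \emph{faithfulness} of $\tau$ (coterminal arrows with a common image coincide, as used in Proposition~\ref{prop:division_cat}) prevents $\mu(\alpha_d(u))$ from meeting any $\tau(0,m,i)$ with $m\in F$. As $\alpha_d(u)$ is the unique well-formed word above $u$, this is exactly condition~\ref{derivedcategory:3}.

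For the converse \ref{derivedcategory:3}$\Rightarrow$\ref{derivedcategory:2}, each $L_i$ is recognised by a monoid of $\V$, and I would gather all the recognisers into a single morphism $\mu:\Ae^*\to N$ with $N\in\V$, taking care to also include the length-preserving renamings $\theta_p:(a,k)\mapsto(a,k+p)$ of $\Ae$ (legitimate since $\V$ is closed under such relabelings), so that $\mu$ recognises every $L_i$ together with each shift $\theta_p^{-1}(L_i)$. I would then define a relation $\tau:C_d(L)\to N$, the identity on objects, by
\[
\tau(i,m,j)=\{\mu(\alpha_d^i(u))\ :\ u\in A^*,\ \eta_L(u)=m,\ |u|\equiv j-i\bmod d\}.
\]
Three of the four defining properties of a division are routine: nonemptiness follows from any witness word of the arrow, the identity arrow contains $\mu(\varepsilon)=1_N$, and $\tau(x)\tau(y)\subseteq\tau(xy)$ holds because $\alpha_d^i(u)\,\alpha_d^j(u')=\alpha_d^i(uu')$ for consecutive arrows.

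The hard part will be faithfulness: two coterminal arrows $(i,m,j)$ and $(i,m',j)$ sharing an element of $N$ must have $m=m'$, i.e. $u$ and $u'$ syntactically equivalent in $M$. The obstruction is that a single recogniser only controls whether $xuy\in L$ when $|x|\equiv i\bmod d$, because only then does the well-formed lift of $xuy$ actually contain $\alpha_d^i(u)$ as a factor; for other contexts the position labels no longer match. This is precisely what recording all the shifts in $\mu$ is meant to repair: equality of the images should force $\mu(\alpha_d^q(u))=\mu(\alpha_d^q(u'))$ for \emph{every} residue $q$, so that for an arbitrary context $(x,y)$ the lifts $\alpha_d(xuy)$ and $\alpha_d(xu'y)$ receive the same image under $\mu$; condition~\ref{derivedcategory:3} then yields $xuy\in L\Leftrightarrow xu'y\in L$ for all contexts, whence $m=m'$. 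Once faithfulness is secured, $\tau$ is a division of $C_d(L)$ into a monoid of $\V$, giving $C_d(L)\in\gV$ and hence \ref{derivedcategory:2}.
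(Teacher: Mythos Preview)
Your outline is essentially the paper's proof. Both invoke Theorem~\ref{semimod} for \ref{derivedcategory:1}$\Leftrightarrow$\ref{derivedcategory:3}, and your argument for \ref{derivedcategory:2}$\Rightarrow$\ref{derivedcategory:3} is identical to the paper's (pick representatives of $\tau$ on letters, push forward, and use faithfulness of the division on coterminal arrows to separate $F$ from its complement). For \ref{derivedcategory:3}$\Rightarrow$\ref{derivedcategory:2}, the paper defines $\tau$ from a morphism $\beta$ recognising only the $L_i$, observes that faithfulness may fail, and repairs it with a \emph{twisted product} $\otimes_d\tau:C_d(L)\to M^d$, $\otimes_d\tau(i,x,j)=\bigl(\tau(i{+}p,x,j{+}p)\bigr)_{p<d}$; your idea of ``including the renamings $\theta_p$'' in $\mu$ is the same device, just pushed into the morphism rather than the relation.

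One point needs tightening. The implication you rely on, ``$\mu(\alpha_d^i(u))=\mu(\alpha_d^i(u'))$ forces $\mu(\alpha_d^q(u))=\mu(\alpha_d^q(u'))$ for every $q$'', does \emph{not} follow from $\mu$ merely \emph{recognising} the languages $\theta_p^{-1}(L_i)$: recognition only yields agreement on membership, not equality of images, so you could not conclude $\mu(\alpha_d(xuy))=\mu(\alpha_d(xu'y))$ for arbitrary contexts. What you need is to take $\mu=\prod_{p<d}\mu_0\circ\theta_p$ into $N_0^d$ (with $\mu_0$ recognising the $L_i$); then the $p$-th coordinate of $\mu(\alpha_d^i(u))$ is $\mu_0(\alpha_d^{i+p}(u))$, and your claim is immediate. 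Once you say this explicitly, your $\tau$ built from this $\mu$ is exactly the paper's $\otimes_d\tau$, and the faithfulness argument (find a separating context $(p,q)$, shift to the residue $t=|p|\bmod d$) goes through verbatim.
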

		\begin{proof}
			The equivalence between the two first points is obtained directly by Theorem~\ref{semimod}. We only
			prove the equivalence between~\eqref{derivedcategory:2} and~\eqref{derivedcategory:3}.
			As always, we denote by $\eta_L:A^*\to M_L$ the syntactic morphism of $L$ and $P = \eta_L(L)$ its
			accepting set.
%
%
%
				\begin{conditions}
				\item[$\eqref{derivedcategory:2}\to~\eqref{derivedcategory:3}$:]
				Assume that $C_d(L)$ belongs to $\gV$. By definition, it means that there exists a division of categories
				$\tau:C_d(L) \to M$, where $M$ is a monoid of $\V$ seen as a one object category.
				We need to define some appropriate languages $L_i$ for $0\leq i <d$.
				To this end,
				we construct an \emph{adequate} morphism from $\Ae^*$ to $M$. \\
				Let then $\beta: \Ae^* \to M$ be defined by $\beta(a,i) = m$
				where $m$ is any element in $\tau(i,\eta(a),i+1\bmod d)$.
			  For $0\leq i < d$, let $E_i = \bigcup_{m\in P}\tau(0,m,i)$ and
				$L_i=\beta^{-1}(E_i)$. Because
				$M$ is in $\V$, these languages are all in $\cF[\sigma]$.

				It remains to verify that these languages satisfy the hypothesis. This is equivalent to check that
				for all $i<d$
				$$\alpha_d(L\cap (A^d)^*A^i) \subseteq L_i\text{ and }\alpha_d(L^c\cap (A^d)^*A^i)\cap L_i =\emptyset.$$
				Let $u=(a_0,0)\cdots (a_n,p)$ be a well-formed word of $\Ae^*$, by construction of $\beta$,
				we have
				$$\beta(u) = m = m_1\cdots m_n \in  \tau(0,\eta_L(a_1),1)\cdots \tau(p,\eta_L(a_n),p+1) \subseteq \tau(0,\eta_L(a_1\cdots a_n),p+1)$$
				Therefore, we have
				$$\beta\Big(\alpha_d(L\cap (A^d)^*A^i)\Big) \subseteq E_i.$$
				Furthermore, since $\tau$ is a division, for all $u\in \alpha_d(L^c\cap (A^d)^*A^i)$,
				$\beta(u)\not \in \tau(0,m,i)$ for all $m\in P$ and thus $\beta(u)\not \in E_i$.

				\item[$\eqref{derivedcategory:3}\to~\eqref{derivedcategory:2}$:]
					Let $L_0,\ldots, L_{d-1}$ be languages of $\cF[\sigma]$ as stated by~\eqref{derivedcategory:3}.
					Then each of them is definable by a monoid of $\V$, and since varieties are closed by product, there exists a morphism $\beta:\Ae^*\to M$ that recognizes them all, with $M\in \V$.
					We now prove that $C_d(L)$ divides $M$.
					Let $\tau:C_d(L)\to M$  be defined by
					$$\tau(i,x,j) = \{ \beta(u)\mid \exists u\in \WF_d(i,j) \text{ s.t. } \eta_L(\pi_d(u)) = x\}$$
					The application $\tau$ satisfies the first three axioms of a division of categories.
					\begin{conditions}
						\item We have $1\in\tau(i,1,i)$ for any $i$ of $\Z/d\Z$.
						\item Let $(i,x,j)$ be an arrow of $C_d(L)$.
						By definition, there exists $v$ in $(A^d)^*A^{j-i}$ such that $\eta_L(v) = x$.
						Let $u=\alpha_d^i(v)\in \WF_d(i,j)$. By definition, $\beta(u)\in \tau(i,x,j)$ and thus
						$\tau(i,x,j)\neq \emptyset$.
						\item Let $(i,x,j)$ and $(j,x',k)$ be two arrows in $C_d(L)$ and
						$m\in\tau(i,x,j)$, $m'\in \tau(j,x',k)$. By hypothesis, there exists $u\in \WF_d(i,j)$ and
						$u'\in \WF_d(j,k)$ such that
						$\beta(u)=m$ and $\beta(u')= m'$, and such that $\eta_L(\pi_d(u)) = x$ and
						$\eta_L(\pi_d(u')) = x'$. Then, $mm'$ belongs to $\tau(i,xx',k)$ since
						$mm'= \beta(uu')$, $uu'\in \WF_d(i,k)$ and $\eta_L(\pi_d(uu')) = xx'$.
					\end{conditions}
					Unfortunately, it could happen that $\tau$ does not satisfy the last condition.
					Without detailing, this is due to the fact that
					some elements of the syntactic congruence of $L$ might merge when appearing at some specific congruences, leading to non empty intersection of images of arrows.
					In the following, we use the idea that for any pair of elements there exists a congruence that separates them by definition of the syntactic congruence.

					Thus, we now introduce a \emph{twisted product} of $\tau$, denoted by $\otimes_d\tau:C_d(L)\to M^d$
					and formally define it by
					$$\otimes_d\tau(i,x,j) = \big(\tau(i,x,j),\tau(i+1,x,j+1),\ldots,\tau(i+d-1,x,j+d-1)\big)$$
					Because $\otimes_d\tau$ is a product of $\tau$ by it self $d$ times, it satisfies immediately the
					first
					three axioms of a division of categories. We now prove that $\otimes_d\tau$ is a division by proving the separation axiom.
					\begin{conditions}
						\setcounter{enumi}{3}
						\item Let $x,x'$ be two distinct elements of $M_L$ such that
						$(i,x,j)$ and $(i,x',j)$ are arrows of $C_d(L)$. We first
						prove that there exists $r,t$ satisfying $r-t=j-i$
						and such that $\tau(t,x,r)\cap \tau(t,x',r) = \emptyset$
						and then conclude by using $\otimes_d\tau$.
						Let $v$ and $v'$ in $(A^d)^*A^{j-i}$ such that $\eta_L(v) = x$ and $\eta_L(v') = x'$.

						Since $x\neq x'$, and by definition of $M_L$, the syntactic monoid of $L$,
						we can assume that there exists $p,q\in A^*$ such that $pvq \in L$
						if and only if $pv'q\not\in L$.
						Let $y=\eta_L(pvq)$ and $y'=\eta_L(pv'q)$.
						We remark that $(0,y,k)$ and $(0,y',k)$ are
						arrows  $C_d(L)$ for $k=|pvq|\bmod{d}=|pv'q|\bmod{d}$.
						Without loss of generality, we assume $pvq$ to be in $L$, the other case being symmetrical.
						By hypothesis, we have the following:
						\begin{align*}
							\eta_L^{-1}(y)\cap& (A^d)^* A^k \subseteq L_k\\
							\eta_L^{-1}(y')\cap& (A^d)^* A^k \cap L_k = \emptyset
						\end{align*}
						However
						\begin{align*}
						\tau(0,y,k)&= \beta\circ\alpha_d\big(\eta_L^{-1}(y)\cap (A^d)^* A^k\big)\\
						\tau(0,y',k)&= \beta\circ\alpha_d\big(\eta_L^{-1}(y')\cap (A^d)^* A^k\big)
						\end{align*}

						Since $L_k$ is recognized by $M$ through the morphism $\beta$ we have
						$$\tau(0,y,k)\cap \tau(0,y,k)=\emptyset.$$
						To conclude, it suffices to notice that
						$$\tau(0,s,t)\cdot\Big(\tau(t,x,r)\cap \tau(t,x',r)\Big)\cdot\tau(r,t,k)\subseteq \tau(0,y,k)\cap \tau(0,y,k)= \emptyset,$$
						where $s=\eta_L(p)$, $t=|p|$, $r=t+j-i$ and $t=\eta_L(q)$. Since both $\tau(0,s,i)$ and
						$\tau(j,t,k)$ are nonempty, we conclude that $\tau(t,x,r)\cap \tau(t,x',r) = \emptyset$.
						We proved that for every arrow $(i,x,j)$ and $(i,x',j)$ in $C_d(L)$, there exists
						$r,t$ $r-t=j-i$ and such that $\tau(t,x,r)\cap \tau(t,x,r)=\emptyset$.
						Therefore, we obtain that  $\otimes_d\tau(i,x,j)\cap  \otimes_d\tau(i,x',j) =\emptyset$
						for every coterminal arrows $(i,x,j)$ and $(i,x',j)$ in $C_d(L)$, which concludes the proof.

					\end{conditions}

			\end{conditions}
		\end{proof}

\subsection{local case}\label{Subsection:Local}

	For any variety $\V$, we define $\QV$ to be the class of morphisms ($lm$-variety of morphisms to be precise,
	see the article of~\cite{PS05} for more details) whose stable
	monoid is in $\V$.
	Following the article of \cite{Tilson}, we denote by
	$\ell \V$ the variety of categories whose local monoids are all in $\V$.
	A variety of monoids $\V$ is said to be \emph{local} if $\gV = \lV$.
	The next theorem makes explicit the link between $\QV$ and $\ell\V$.

	\begin{theorem}\label{QVlV}
	Let $\V$ be a variety and $L$ a regular language of $A^*$ of
	stability index $s$. The following properties are equivalent:
	\begin{conditions}
	\item\label{qv} $L$ is recognized by a morphism in $\QV$,
	\item\label{dlv} there exists an integer $d$ such that $C_d(L)$ is in $\ell\V$,
	\item\label{slv} $C_s(L)$ is in $\ell\V$.
	\end{conditions}
	\end{theorem}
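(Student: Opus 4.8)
The plan is to collapse all three conditions to a single statement about the stable monoid of $L$, and then to isolate the only substantial implication. The reduction tool is Lemma~\ref{monLocaux}: for every $d$ the local monoids of $C_d(L)$ are all isomorphic to $\eta_L((A^d)^*)$, so by the very definition of $\ell\V$ we have $C_d(L)\in\ell\V$ if and only if $\eta_L((A^d)^*)\in\V$. Specializing to $d=s$ gives that $C_s(L)\in\ell\V$ if and only if the stable monoid of $L$ belongs to $\V$, which is the backbone of the argument.

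Granting this translation, the equivalence \ref{qv}~$\Leftrightarrow$~\ref{slv} is essentially definitional: $\QV$ is by definition the class of morphisms whose stable monoid lies in $\V$, and since it is an $lm$-variety of morphisms (\cite{PS05}), a language is recognized by a morphism of $\QV$ exactly when its syntactic morphism belongs to $\QV$, i.e. exactly when the stable monoid of $L$ is in $\V$. By the previous paragraph this is \ref{slv}. The implication \ref{slv}~$\Rightarrow$~\ref{dlv} is trivial, taking $d=s$.

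The actual work is the implication \ref{dlv}~$\Rightarrow$~\ref{slv}, where an arbitrary witness $d$ must be reduced to the canonical witness $s$. I would set $m=\mathrm{lcm}(s,d)$. Since $d\mid m$, Proposition~\ref{prop:division_cat} gives that $C_m(L)$ divides $C_d(L)$, so $C_m(L)\in\ell\V$ because $\ell\V$, being a variety of categories, is closed under division; equivalently, one can argue at the monoid level using $(A^m)^*\subseteq(A^d)^*$, which makes $\eta_L((A^m)^*)$ a submonoid of $\eta_L((A^d)^*)$ and hence an element of $\V$. It then remains to identify $\eta_L((A^m)^*)$ with the stable monoid: as $s\mid m$, and as the defining property of the stability index forces $\eta_L(A^{js})=\eta_L(A^s)$ for all $j\geq1$ (the set $\eta_L(A^s)$ being idempotent in the power monoid of $M_L$), one gets $\eta_L((A^m)^*)=\{1\}\cup\eta_L(A^s)=\eta_L((A^s)^*)$. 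Thus the stable monoid lies in $\V$, which is \ref{slv}.

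The main obstacle is precisely this last step: enlarging the modulus from $d$ to $\mathrm{lcm}(s,d)$ must simultaneously keep us inside $\V$ and collapse the enlarged monoid back onto the canonical stable monoid. Both halves are short once the stabilization property of $s$ is invoked, but they are the crux; the rest is bookkeeping with Lemma~\ref{monLocaux} and the $lm$-variety formalism. A minor care point is to rely on closure of $\ell\V$ under division when using Proposition~\ref{prop:division_cat}; should one wish to avoid even this, the submonoid inclusion $(A^m)^*\subseteq(A^d)^*$ gives the same conclusion directly.
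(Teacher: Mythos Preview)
Your argument is correct and matches the paper's proof essentially line for line: both reduce via Lemma~\ref{monLocaux} to statements about $\eta_L((A^d)^*)$, and for the substantive implication both pass to a common multiple of $s$ and $d$ (you use $\mathrm{lcm}(s,d)$, the paper uses $ds$), invoke the submonoid inclusion, and then collapse back to $\eta_L((A^s)^*)$ using the idempotency of $\eta_L(A^s)$ in the power monoid. The only cosmetic difference is that the paper closes the cycle $\ref{dlv}\to\ref{qv}$ directly rather than through $\ref{slv}$, and it does not bother with Proposition~\ref{prop:division_cat} since the submonoid argument already suffices.
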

	\begin{proof}
	$\ $\\
	$\ref{qv}\to \ref{slv}$. If $L$ is recognized by a stamp in $\QV$, then its syntactic stamp is also in $\QV$
			and its stable monoid
			is in $\V$. But, thanks to Lemma \ref{monLocaux},
			the local monoids of $C_s(L)$  belong to $\V$, and thus $C_s(L)$ is in $\ell\V$.\\
	$\ref{slv}\to \ref{dlv}$. Is obvious.\\
	$\ref{dlv}\to \ref{qv}$. Suppose that $C_d(L)$ is in $\ell\V$.
			Then the local monoids of $C_d(L)$,
			which are isomorphic to $\eta_L((A^d)^*)$ by Lemma \ref{monLocaux}, belong to $\V$.
			Thus $\eta_L((A^{ds})^*)$, which is a submonoid of $\eta_L((A^d)^*)$, also
			belongs to $\V$. Finally, by definition of the stability index, the monoid
			$\eta_L((A^s)^*)=\eta_L((A^{ds})^*)$ is in $\V$ and thus
			$\eta_L$ is in $\QV$.
	\end{proof}

	Observe that any monoid of $\V$, viewed as a one-object category,
	belongs to $\ell\V$. Therefore by definition of $\gV$, any category of $\gV$
	divides a category of $\ell\V$, and thus $\gV\subseteq\ell\V$.
	The varieties satisfying $\gV=\ell\V$ are exactly the  local varieties.
	Combining this with Theorem~\ref{thm:derivedcategory} and
	 since the stability index and the stable monoid of
	 a given regular language are computable, one gets the following corollary.
	 \begin{corollary}
		Let $\cF[\sigma]$ be a fragment equivalent to a local variety $\V$. Then $\cF[\sigma]$ is
		decidable  if and only if $\cF[\sigma,\MOD]$ is decidable. Furthermore, the fragment $\cF[\sigma,\MOD]$ is
		equivalent to $\QV$.
	\end{corollary}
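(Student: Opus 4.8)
The plan is to prove both assertions at once by threading together the equivalences already in place, the single genuinely new input being that locality lets us replace $\gV$ by $\ell\V$. First I would record, as observed at the opening of Section~\ref{Section:Finite}, that a regular language $L$ lies in $\cF[\sigma,\MOD]$ if and only if there is some positive integer $d$ with $L\in\cF[\sigma,\MOD^d]$; this is because any single formula mentions only finitely many modular predicates, and finitely many congruences collapse into one via $\cF[\sigma,\MOD^d,\MOD^p]\subseteq\cF[\sigma,\MOD^{dp}]$. By Theorem~\ref{thm:derivedcategory}, each membership $L\in\cF[\sigma,\MOD^d]$ is equivalent to $C_d(L)\in\gV$, so $L\in\cF[\sigma,\MOD]$ holds exactly when $C_d(L)\in\gV$ for some $d$. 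Since $\V$ is local we have $\gV=\ell\V$, and therefore $L\in\cF[\sigma,\MOD]$ if and only if $C_d(L)\in\ell\V$ for some $d$.

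Next I would invoke Theorem~\ref{QVlV}, whose equivalence $\ref{qv}\Leftrightarrow\ref{dlv}\Leftrightarrow\ref{slv}$ shows that the existence of some $d$ with $C_d(L)\in\ell\V$ is the same as $C_s(L)\in\ell\V$, for $s$ the stability index of $L$, which is in turn the same as $L$ being recognized by a morphism in $\QV$, i.e.\ having its stable monoid in $\V$ (using Lemma~\ref{monLocaux}). Concatenating the steps yields the single chain $L\in\cF[\sigma,\MOD]\iff C_s(L)\in\ell\V\iff L\text{ is recognized by a morphism of }\QV$, which is precisely the claimed equivalence of $\cF[\sigma,\MOD]$ with $\QV$.

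For the decidability statement I would read the collapsed criterion as: $L\in\cF[\sigma,\MOD]$ if and only if the stable monoid of $L$ belongs to $\V$. Because the stability index and the stable monoid of a regular language are computable, and because $\cF[\sigma]$ is equivalent to $\V$ so that deciding the fragment amounts to deciding whether a syntactic monoid lies in $\V$, the forward implication is immediate: if $\cF[\sigma]$ is decidable, then to test $L\in\cF[\sigma,\MOD]$ one computes the stable monoid of $L$ and checks it against $\V$. For the converse I would exhibit, for an arbitrary finite monoid $M$, a concrete language realizing $M$ as its stable monoid — for instance the language over the alphabet $A=M$ recognized by the identity morphism $A^*\to M$, whose letters already cover $M$, so that it is stable of index $1$ with stable monoid $M$. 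A decision procedure for $\cF[\sigma,\MOD]$ applied to such a language then decides whether $M\in\V$, and hence decides $\cF[\sigma]$.

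Given the weight of Theorems~\ref{thm:derivedcategory} and~\ref{QVlV}, the argument is essentially bookkeeping. I expect the only delicate point to be the converse decidability direction: one must be careful that deciding the fragment $\cF[\sigma]$ genuinely coincides with the abstract $\V$-membership problem, and must produce a concrete language realizing an arbitrary monoid as its stable monoid so that the reduction runs in the direction needed.
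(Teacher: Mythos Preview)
Your proposal is correct and matches the paper's approach exactly: the paper's entire proof is the one sentence preceding the corollary, which says to combine locality ($\gV=\ell\V$) with Theorems~\ref{thm:derivedcategory} and~\ref{QVlV} and the computability of the stability index and stable monoid. Your explicit treatment of the converse decidability direction actually goes beyond what the paper writes (it leaves that direction implicit); the idea of working over $A=M$ so that the stability index is $1$ is the right move, though to make it fully precise you should specify an accepting set---for instance pull back $\eta_L(L)$ along the evaluation morphism, so that the syntactic monoid, and hence the stable monoid, of the resulting language is exactly $M_L$.
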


		Adding modular predicates does not always coincide with the $\mathbf{Q}$ operation.
		A counterexample is the variety $\J$, which is known to be nonlocal.
		\cite{CPS06b} proved the decidability of $\BS_1[<,\MOD]$, using the
		characterization of $g\J$ given by~\cite{Kna83} (see Figure~\ref{gJ}).
		Using this characterization, we can prove that the language $(aa)^*ab(bb)^*$,
		whose stable monoid is in $\J$ does not satisfy Knast's equation since
		\begin{align*}
		(m_1m_2)^\omega (m_3m_4)^\omega&=(aa)^\omega(bb)^\omega=aabb\\
		&\neq ab =(aa)^\omega ab(bb)^\omega= (m_1m_2)^\omega m_1m_4(m_3m_4)^\omega
		\end{align*}
		It is therefore not definable in $\BS_1[<,\MOD]$ (see Example~\ref{Excatder}).
\begin{figure}[H]
\centering
\begin{tikzpicture}[->,>=latex',shorten >=1pt,node distance=1.2cm,  every loop/.style={looseness=5}]
\node [state](q_0){$i$};
\node [state](q_1) at ($(q_0) + (4,0)$) {$j$};
   \path[->]
   (q_0) edge [bend left = 25]  node[fill = white] {$m_1,m_3$} (q_1)
   (q_1) edge [bend left = 25] node[fill = white] {$m_2,m_4$} (q_0);

\node (equa) at ($(q_0) + (2,-2)$) {$(m_1m_2)^\omega (m_3m_4)^\omega=(m_1m_2)^\omega m_1m_4(m_3m_4)^\omega$};
\end{tikzpicture}
\caption{Path equation of $\mathbf{g}\J$ by Knast. }\label{gJ}
\end{figure}

\subsection{Finite rank}\label{Subsection:FiniteRank}

Although the local property gives a nice algebraic characterisation, it only applies to a few varieties.
Nonetheless, we can still obtain a delay when the global is well-understood.
To be more precise, we now prove a delay for varieties where equations for the global are known.
As the global is a variety of categories, we first extend the framework of profinite equations to categories.
Note finally that this is the only case where we obtain a delay that is greater than the stability index.
The main applications on fragments of logic are given in Corollary~\ref{FinRank-CorDecid}.

\paragraph{Path equations}
		The theory of profinite equation of varieties of monoids extends naturally to path equations on graphs,
		characterising varieties of categories.
		The complexity of a variety of categories is given by its rank,
		which is the minimal size required to describe the variety in terms of path equations.
			Let $X$ be a graph and $E$ the set of arrows of $X$. Then $X^*$
			is the set of words on $u=u_0\cdots u_n\in E^*$ such that for all
			$i<n$, $u_i$ and $u_{i+1}$ are consecutive arrows.
			$X^*$ is named the free category on $X$.
			Let $u$ and $v$ be coterminal paths of $X^*$. Then
			$$r(u,v)=\min{\big\{n\mid \exists \varphi:X^*\to C\text{ with }n=|C|\text{ and }\varphi(u)\neq \varphi(v)\big\}}$$
			where $\varphi$ is a category morphism and $C$ a finite category.
			We define $d(u,v)=2^{-r(u,v)}$ which is an ultrametric distance  on $X^*$.
			The completion of $X$ for this metric is called the profinite free category on $X$ and is denoted by $\widehat{X}^*$.
			The following proposition is very standard in the framework of (pseudo-)varieties of monoids and categories.
		\begin{proposition}
			Let $X$ be a graph, $C$ a finite category and $\varphi:X^*\to C$ a morphism of categories.
			Then, there exists a unique continuous function $\widehat{\varphi}:\widehat{X^*} \to C$ that extends $\varphi$.
			Furthermore, for any $u\in \widehat{X^*}$, there exists $v\in X^*$ such that $\widehat\varphi(u) = \widehat\varphi(v)$.
		\end{proposition}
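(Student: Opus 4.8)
The plan is to treat the proposition as the standard extension-to-completion theorem for uniformly continuous maps, specialized to the ultrametric $d$ defined just above. The two claims come apart cleanly: existence and uniqueness of $\widehat{\varphi}$ is a completion argument resting on the fact that $\varphi$ is uniformly continuous, while the final ``image'' property is a density argument exploiting that $C$ is discrete.

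First I would dispatch uniqueness. As $X^*$ is dense in its completion $\widehat{X^*}$ by construction, and $C$ is finite, hence a discrete (so Hausdorff) space, any two continuous maps $\widehat{X^*}\to C$ that agree on $X^*$ must coincide. Thus $\varphi$ admits at most one continuous extension.

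The crux of existence is that $\varphi$ is uniformly continuous. Put $n=|C|$. I claim that any two coterminal paths $u,v$ with $d(u,v)<2^{-n}$ satisfy $\varphi(u)=\varphi(v)$: the inequality $d(u,v)<2^{-n}$ means $r(u,v)>n$, so by the definition of $r$ no morphism into a category of size at most $n$ separates $u$ and $v$; since $\varphi$ is such a morphism, with $|C|=n$, it cannot separate them. Hence $\varphi$ is uniformly continuous with modulus $2^{-n}$ from $(X^*,d)$ into the finite, discrete, and therefore complete space $C$. The universal property of the metric completion then furnishes a unique continuous extension $\widehat{\varphi}:\widehat{X^*}\to C$, completing the first assertion.

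Finally, the image property follows from density and discreteness. Since $\widehat{X^*}$ is the completion of $X^*$ carried out hom-set by hom-set, any $u\in\widehat{X^*}$ is the limit of a sequence $(v_k)$ of genuine paths coterminal with $u$. By continuity $\widehat{\varphi}(v_k)\to\widehat{\varphi}(u)$; but $C$ is discrete, so the sequence $\widehat{\varphi}(v_k)=\varphi(v_k)$ is eventually constant, equal to $\widehat{\varphi}(u)$. Taking $v=v_k$ for any large enough $k$ yields $v\in X^*$ with $\widehat{\varphi}(u)=\widehat{\varphi}(v)$. I foresee no real obstacle: the sole substantive step is the uniform continuity bound, which is engineered directly into the definition of $d$; the only point requiring care is that $d$ is defined only between coterminal paths, so one argues within a single hom-set throughout.
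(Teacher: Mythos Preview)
Your proof is correct. The paper does not actually prove this proposition; it merely states it and remarks that it is ``very standard in the framework of (pseudo-)varieties of monoids and categories.'' Your argument---uniform continuity of $\varphi$ with modulus $2^{-|C|}$ read off directly from the definition of $r$, followed by the universal property of the completion and a density-plus-discreteness argument for the image claim---is exactly the standard proof the paper is alluding to.
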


			Let $X$ be a graph and $u,v\in \widehat{X^*}$ coterminal profinite paths.
			We say that the finite category $C$ satisfy the equation $(X,u=v)$ if for any morphism
			$\varphi: X^*\to C$, we have $\widehat{\varphi}(u)=\widehat{\varphi}(v)$.
		\begin{theorem}[Tilson]
			Every non trivial variety of finite categories is defined by a set of equations. 
		\end{theorem}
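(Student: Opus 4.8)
The plan is to run the category-theoretic analogue of Reiterman's theorem. Fix a nontrivial variety of finite categories $\cV$ and let $\Sigma$ be the set of \emph{all} equations $(X,u=v)$, ranging over every finite graph $X$ and every pair of coterminal profinite paths $u,v\in\widehat{X^*}$, that are satisfied by every category of $\cV$. Write $[\Sigma]$ for the class of finite categories satisfying all equations of $\Sigma$. By the very definition of $\Sigma$ we have $\cV\subseteq[\Sigma]$, so the whole content of the theorem is the reverse inclusion $[\Sigma]\subseteq\cV$, which I would establish by contraposition: starting from a finite category $C\notin\cV$, I produce an equation of $\Sigma$ that $C$ fails, i.e. I show $C\notin[\Sigma]$.

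First I would fix the test graph. Take $X$ to be the underlying graph of $C$, with $\Ob(X)=\Ob(C)$ and one edge for each arrow of $C$; the evaluation morphism $\pi:X^*\to C$ which is the identity on objects and sends each edge to the corresponding arrow is then onto. Next I would build the relatively free pro-$\cV$ category on $X$ by collecting the $\cV$-quotients of $X^*$ that are bijective on objects into a projective system and taking $\widehat{F}_{\cV}(X)$ to be their projective limit. By a standard compactness argument its finite continuous quotients are exactly the $\cV$-quotients of $X^*$. The universal property supplies a canonical continuous morphism $\kappa:\widehat{X^*}\to\widehat{F}_{\cV}(X)$, and by the proposition above $\pi$ extends uniquely to a continuous morphism $\widehat\pi:\widehat{X^*}\to C$.

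The crux is to factor $\widehat\pi$ through $\kappa$. The key observation is the description of $\ker\kappa$: two coterminal profinite paths $u,v$ satisfy $\kappa(u)=\kappa(v)$ if and only if they are identified by every $\cV$-quotient of $X^*$, that is, if and only if the equation $(X,u=v)$ holds throughout $\cV$, i.e. $(X,u=v)\in\Sigma$. Assuming $C\in[\Sigma]$, every such pair then satisfies $\widehat\pi(u)=\widehat\pi(v)$, so $\widehat\pi$ descends to a continuous morphism $\widehat{F}_{\cV}(X)\to C$, which is onto because $\pi$ is. By compactness this onto morphism factors through some finite stage $N$ of the projective system, so $C$ is a quotient of a category $N\in\cV$. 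As $\cV$ is closed under division, $C\in\cV$, contradicting the choice of $C$. This proves $[\Sigma]\subseteq\cV$, hence $\cV=[\Sigma]$, as desired.

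The main obstacle is precisely this factorization step, which rests on two pieces of profinite bookkeeping that must be set up with more care for categories than for monoids: the identification of $\ker\kappa$ with the $\cV$-identities, and the compactness argument allowing a finite quotient of the projective limit to factor through a finite stage. Both require that all the quotients and morphisms involved be bijective on objects, so that the object set of $X$ is preserved throughout and coterminality of paths is respected; this is exactly the point where the combinatorics of graphs replaces the purely algebraic setting of Reiterman's original theorem. The nontriviality hypothesis is only a mild technical condition excluding the degenerate variety and plays no role in the construction above.
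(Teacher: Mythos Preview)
The paper does not prove this theorem at all: it is stated as a background result attributed to Tilson and is used only to justify Definition~\ref{def:rank} (the rank of a variety). There is therefore no ``paper's own proof'' to compare against.

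Your sketch is the standard Reiterman-type argument transported to the category setting, and the overall architecture is correct: collect all $\cV$-identities over all finite graphs, and for a given $C\notin\cV$ use the relatively free pro-$\cV$ category on the underlying graph of $C$ to produce a separating identity. Two points deserve a little more care if you want the argument to stand on its own. First, the claim that $\ker\kappa$ coincides with the $\cV$-identities over $X$ needs the observation that any morphism $X^*\to D$ with $D\in\cV$ factors through a $\cV$-quotient of $X^*$ that is bijective on objects (obtained by taking the image); otherwise satisfaction ``throughout $\cV$'' is a priori a stronger condition than being identified by every object-bijective $\cV$-quotient of $X^*$. Second, the step ``a continuous onto morphism from the projective limit to a finite category factors through a finite stage'' is where the nontriviality hypothesis actually bites: you need the directed system of object-bijective $\cV$-quotients of $X^*$ to be nonempty and cofiltered, and for that $\cV$ must contain at least one category with the right object set. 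Apart from these routine verifications, your outline matches the proof one finds in Tilson's paper and in standard references such as Almeida's book.
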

		\begin{definition}[Rank of a variety]\label{def:rank}
			We say that a variety of monoids $\V$ has a rank $k$  if its global is defined by a set of bounded path
			equations with at most $k$ vertices. If $\V$ has a finite rank, we denote by
			$\rank(\V)$ the minimal $k$ such that $\V$ has a rank $k$.
		\end{definition}
		We remark that the varieties of rank one are exactly the local ones.
		Furthermore, most of the known fragments of logic are equivalent to a variety of finite rank. The question remains however
		open in some cases, as for instance for the levels of the dot-depth hierarchy.
		\begin{example}\label{FinRank-Ex}
			We now  give several varieties where equations for the global are known.
			\begin{enumerate}
				\item Several varieties are known to be local. For instance, the variety of semilattice monoids $\J^1=\equations{xy=yx, x^2=x}$,
				the variety $\DA=\equations{(xy)^\omega x (xy)^\omega = (xy)^\omega}$,
				the variety of aperiodic monoids $\A = \equations{x^\omega = x^{\omega+1}}$.
				\item The variety of commutative monoids  $\Com=\equations{xy=yx}$. The variety of categories
				$g\Com$ is defined below and thus is of rank $2$.	

\begin{tikzpicture}[->,>=latex',shorten >=1pt,node distance=0.5cm, every loop/.style={looseness=5}]
\node [state](q_0){};
\node [state](q_1) at ($(q_0) + (3,0)$) {};
\node  at ($(q_1) +(3,0)$) { $xyz = zyx$};
   \path[->]
   (q_0) edge [bend left = 25]  node[fill = white] {$x,z$} (q_1)
   (q_1) edge [bend left = 25] node[fill = white] {$y$} (q_0);  
\end{tikzpicture}
%
%
%
%
%
%
%
%

				\item A recent algebraic description of
				the languages definable by formulas of $\BS_{k+1}^2[<]$
				was established in~\cite{KS12,KW12}.
				 In the subsequent we will denote by $\V_k$ the equivalent variety of monoids.
				This result was extended to $\BS_{k+1}^2[<,\loc]$ in~\cite{KL12}. From this latter result we derive
				the following description of $\gV_k$, giving a rank of at most $2k$.
			\end{enumerate}
							\begin{figure}[h!]
\begin{tikzpicture}[->,>=latex',shorten >=1pt,node distance=1.2cm, every loop/.style={looseness=5}]
\node [state](q_0){};
\node [state](q_1) at ($(q_0) + (4,0)$) {};
\node  at ($(q_1) +(5,5)$) {We define $U_1 = (sx_1)^\omega s(y_1t)^\omega$ and $V_1=(sx_1)^\omega t(y_1t)^\omega$};
\node  at ($(q_1) +(5,4)$) {$U_k = (p_kU_{k-1}q_kx_k)^\omega p_k U_{k-1}q_k(y_{k}p_kU_{k-1}q_k)^\omega$} ;
\node  at ($(q_1) +(5,3)$) {$V_k = (p_kU_{k-1}q_kx_k)^\omega p_k V_{k-1}q_k(y_{k}p_kU_{k-1}q_k)^\omega$} ;
\node  at ($(q_1) +(5,0)$) {$g\V_k$ satisfies the equation ${U_k = V_k}$} ;

\node  [state](q_2) at ($(q_0) +(0,2)$) { };
\node  [state](q_3) at ($(q_1) +(0,2)$) { };
\node  [state](q_4) at ($(q_2) +(0,3)$) { };
\node  [state](q_5) at ($(q_3) +(0,3)$) { };
\node (p2) at ($(q_2)+(0,1)$){$p_3$}; 
\node (p3) at ($(q_3)+(0,1)$){$q_3$}; 
\node (pm) at ($(q_4)-(0,1)$){$p_k$}; 
\node (qm) at ($(q_5)-(0,1)$){$q_k$};
\node (p4) at ($(q_2)+(2,2)$){$\vdots$}; 
\node (p22) at ($(p2)+(0,0.5)$){$\vdots$}; 
\node (p33) at ($(p3)+(0,0.5)$){$\vdots$}; 
%

   \path[->]
   (q_0) edge [bend left = -25]  node[fill = white] {$s,t$} (q_1)
   (q_1) edge [bend left = -25] node[fill = white] {$x_1,y_1$} (q_0)
   (q_1) edge[bend left = -25]  node[fill = white] {$q_2$} (q_3)
   (q_3) edge [bend left = -25] node[fill = white] {$x_2,y_2$} (q_2)
   (q_5) edge [bend left = -25] node[fill = white] {$x_k,y_k$} (q_4)
   (q_2) edge [bend left =- 25] node[fill = white] {$p_{2}$} (q_0)
   (q_3) edge  (p3)
   (p2) edge  (q_2)
   (q_4) edge  (pm)
   (qm) edge  (q_5)
 ;


;

\end{tikzpicture}
\caption{An equation for $\gV_k$.}\label{Fig:gVk}
\end{figure}

		\end{example}

	\begin{theorem}[A  Delay Theorem for finite rank varieties]\label{delay-FinRank}
			Let $\cF[\sigma]$ be a fragment equivalent to a variety $\V$ rank $k$.
			A language $L$ belongs to $\cF[\sigma,\MOD]$ if and only if $L$ belongs to
			$\cF[\sigma,\MOD^{ks}].$
		\end{theorem}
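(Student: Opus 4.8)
The plan is to reduce everything to the category characterisation of Theorem~\ref{thm:derivedcategory}. The inclusion $\cF[\sigma,\MOD^{ks}]\subseteq\cF[\sigma,\MOD]$ is immediate since $\MOD^{ks}\subseteq\MOD$, so only the converse has content. Suppose $L\in\cF[\sigma,\MOD]$. As observed at the start of Section~\ref{Section:Finite}, $L\in\cF[\sigma,\MOD^{d_0}]$ for some $d_0$, and by Theorem~\ref{thm:derivedcategory} this means $C_{d_0}(L)\in\gV$. Using Proposition~\ref{prop:division_cat} together with the closure of $\gV$ under division, $C_{d}(L)\in\gV$ for every multiple $d$ of $d_0$; I will fix such a $d$ that is also a multiple of $s$ and as large as needed (finitely many configurations will occur, so one $d$ suffices). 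The goal is then to prove $C_{ks}(L)\in\gV$, since Theorem~\ref{thm:derivedcategory} will then give $L\in\cF[\sigma,\MOD^{ks}]$.

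Because $\V$ has rank $k$, by Definition~\ref{def:rank} the variety $\gV$ is defined by path equations on graphs with at most $k$ vertices, so it suffices to verify each such equation $(X,u=v)$ on $C_{ks}(L)$. Any morphism $\varphi\colon X^*\to C_{ks}(L)$ corestricts to the full subcategory $C'$ of $C_{ks}(L)$ induced by the image $\varphi(\mathrm{Ob}(X))$, which has at most $k$ objects. Since $\gV$ is closed under division, it is therefore enough to establish the key lemma: every subcategory of $C_{ks}(L)$ induced by at most $k$ residues divides $C_d(L)$. Granting this, $C'\in\gV$ satisfies $(X,u=v)$, whence $\widehat\varphi(u)=\widehat\varphi(v)$; as this holds for all equations and all morphisms, $C_{ks}(L)\in\gV$.

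To prove the lemma I would build an explicit division $\tau\colon C'\to C_d(L)$ acting as the identity on the $M_L$-component of arrows, so that the separation axiom holds automatically and the only content is the choice of the images of the $m\le k$ objects. Listing the chosen residues cyclically in $\Z_{ks}$ as $r_{(1)}<\dots<r_{(m)}$ with cyclic gaps $g_1,\dots,g_m$ summing to $ks$, I would place images in $\Z_d$ with cyclic gaps $h_1,\dots,h_m$ summing to $d$, taking $h_i=g_i$ whenever $g_i<s$ and $h_i\equiv g_i\pmod s$ with $h_i$ large otherwise. The crucial point—the only place where the vertex bound $k$ and the modulus $ks$ are used together—is that, since $m\le k$ and $\sum_i g_i=ks$, at least one gap satisfies $g_i\ge s$; this large gap can absorb the value $d-\sum_{j\neq i}h_j$ while keeping the correct residue modulo $s$ (here I use that $d$ and $ks$ are both multiples of $s$). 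Using $\eta_L(A^n)=\eta_L(A^{n+s})$ for all $n\ge s$ (a standard consequence of $s$ being the stability index), matching the gaps modulo $s$ guarantees that every long arrow between the $r$'s is realised between the corresponding images, while keeping the small gaps equal handles the finitely many arrows realised only by short words of length $<s$. This yields $C'(r_a,r_b)\subseteq C_d(L)(\rho_a,\rho_b)$ for all $a,b$, so $\tau$ is a division.

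The main obstacle is precisely this spacing argument: one must simultaneously respect the cyclic combinatorics of the residues, the mod-$s$ periodicity of the realisable lengths, and the rigidity of the short (length $<s$) words. The pigeonhole step producing a gap of size at least $s$ is exactly what makes $ks$—rather than a smaller modulus such as $s$—the correct delay, and it is also why the argument genuinely fails below $ks$.
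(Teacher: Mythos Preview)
Your proposal is correct and follows essentially the same route as the paper. Both arguments reduce to Theorem~\ref{thm:derivedcategory}, use pigeonhole on the at most $k$ object-images inside $\Z_{ks}$ to locate a gap of length at least $s$, and exploit the stability identity $\eta_L(A^n)=\eta_L(A^{n+s})$ for $n\ge s$ to transfer arrows while keeping the $M_L$-component fixed. The only differences are cosmetic: the paper argues by contradiction (assuming an equation $(X,u=v)$ fails on $C_{ks}(L)$ and building a witness morphism into $C_{ds}(L)$), stretches exactly one gap via the explicit map $\theta(i)=i$ or $\theta(i)=ds+i-ks$, and then pushes $\varphi$ along $\theta$; you phrase it positively as a division of each $\le k$-object full subcategory into some $C_d(L)$ and allow several large gaps to be adjusted modulo $s$, which is a mild generalisation of the same construction.
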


		\begin{proof}
	First notice that since the if condition is trivial, we only need to prove the only if implication.
	Remark now that if $\rank(\V)=1$ then the variety is local and we know that we can restrict to congruence modulo the stability index.
	For the rest of the proof we assume that $\rank(\V) = k>1$.\\
	Let now $d$ be such that $C_d(L) \in \gV$.
	Without loss of generality, we assume that $d$ is greater than $k$.
	Indeed if $d\leq k$, we consider $d'=dk$. Then by Proposition~\ref{prop:division_cat} $C_{d'}(L)$ divides $C_d(L)$ and thus also belongs to $\gV$. \\
	 So in the remainder of the proof we will assume that $ds>{ks}$.
	 Since $C_d(L)\in \gV$ we know that $C_{ds}(L)\in \gV$.
		Then $C_{ds}(L)$ satisfies every path equation $(X,u=v)$ defining $\gV$.
	The goal of this proof is to show if $C_{ks}(L)$ does not satisfy a path equation defining $\gV$, then $C_{ds}(L)$ cannot satisfy it either.

		So assume that there exists a path equation $(X,u=v)$ of rank $k$ defining $\gV$ that is not satisfied by $C_{ks}(L)$.
		Then, there exists a category morphism $\varphi:X^*\to C_{ks}(L)$
		such that $\widehat{\varphi}(u)\neq \widehat{\varphi}(v)$.
		We define $V=\varphi(\Ob(X))$ the set of objects of $C_{ks}(L)$ that have a preimage by $\varphi$, and
		$$E=\{(i,m,j)\in C_{ks}(L) \mid \exists e\in X \ \varphi(e)=(i,m,j)\}$$
		 the set of arrows that have a preimage by $\varphi$.
		Notice that $E\subseteq V\times M_L\times V$.

		We will construct a category morphism $\psi:X^*\to C_{ds}(L)$
		such that $\widehat{\psi}(u)\neq \widehat{\psi}(v)$.
		In order to do that, we define a map $\theta:V\to C_{ds}(L)$
		such that for all $(i,m,j)$ in $E$, $(\theta(i),m,\theta(j))$ is an arrow of $C_{ds}(L)$.

		\begin{lemma}
		There exists a smallest integer $i_V<{ks}$ such that
		 $\{i_V+1,\ldots,i_V+s-1\bmod{ks}\}\cap V=\emptyset$.
		\end{lemma}
		\begin{proof}
		As the size of $X$ is $k$, the size of $V$ is at most $k$.
		Then the maximal distance between two consecutive vertices of $V$
		is at least $ks/k=s$.
		\end{proof}

		We define $\theta:V\to\Ob(C_{ds}(L))$ as follow :
		$$\theta: \begin{cases}
			i \mapsto i \bmod{ds} \text{ if }i\leq i_V \\
			i \mapsto ds+i-{ks}  \text{ otherwise}.
		\end{cases}$$
		The idea behind this is that $i=\theta(i)$ if $i$ appears before the gap and ${ks}-i= ds-\theta(i)$ if $i$ appears after it.
		Then each arrow from $E$ will either appear directly as it does for $C_{ks}(L)$ if it does not go over the gap, and since the gap is of size $s$, we will be able to pump the arrows that go over it.

		\begin{lemma}\label{lemma:theta_arrow}
		For any arrow $(i,m,j)$ of $E$, $(\theta(i),m,\theta(j))$ is an arrow of $C_{ds}(L)$.
		\end{lemma}
		\begin{proof}
		Let $(i,m,j)$ be an arrow of $E$. Then there exists a word $u$ such that
		$\eta_L(u)=m$ and $i+|u|=j \bmod{ks}$.
		We now distinguish the cases depending on the length of $u$.
		\begin{itemize}
		\item If $|u|\geq s$, then we know, by definition of the stability index, that for any positive integer $\ell$, there exists a word $u_\ell$ such that
		$\ell s\leq |u_\ell| <(\ell+1)s$, $|u|=|u_\ell|\bmod s$ and $u\equiv_L u_\ell$.
		Then as $\theta$ preserves the congruence modulo $s$,
		$(\theta(i),\eta_L(u_\ell),\theta(j))=(\theta(i),m,\theta(j))$
		is an arrow of $C_{ds}(L)$.
		\item If $|u|<s$, then we have to treat several subcases:
			\begin{itemize}
				\item If $\theta(i)=i$ and $\theta(j)=j$, then $\theta(i)+|u|=\theta(j)\bmod{ds}$.
				Thus $(\theta(i),m,\theta(j))$ is an arrow of $C_{ds}(L)$.
				\item If $\theta(i)=ds+i-{ks}$ and $\theta(j)=ds+j-{ks}$,
		 then as $u$ has a size smaller than $s$, we have $i<j$ and $\theta(j)-\theta(i)=j-i$. Consequently $\theta(i)+|u|=\theta(j)\bmod{ds}$ and
		  		$(\theta(i),m,\theta(j))$ is an arrow of $C_{ds}(L)$.
		  		\item If $\theta(i)=ds+i-{ks}$ and $\theta(j)=j$, then
		  		$i+|u|=j+{ks}$.
		  		So  $\theta(i)+|u|= ds+i-{ks} +|u|=j+ds$.
		  		The same word $u$ labels an arrow from $\theta(i)$ to $j$ and thus
		  		$(\theta(i),m,\theta(j))$ is an arrow of $C_{ds}(L)$.
		  		\item Finally, the case where  $\theta(i)=i$ and $\theta(j)=ds+j-{ks}$ cannot happen since
		  		it implies that $i\leq i_V$ and $j>i_V+s$, and that
		  		$|u|=j-i> s \bmod{ks}$ which contradicts the $|u|<s$ hypothesis.

			\end{itemize}
		\end{itemize}
		\end{proof}

		We now define a new morphism $\psi:X^*\to C_{ds}$. We proceed as follow:
		\begin{itemize2}
				\item First we define $\Ob(\psi)$ to be $\theta\circ \Ob(\varphi)$.
				\item We now have to define $\psi$ on arrows.
				Let $e$ be an arrow of $X$ and $\varphi(e) = (i,m,j)$.
				We set $\psi(x) = \big(\theta(i),m,\theta(j)\big)$. This is well defined thanks
				to Lemma~\ref{lemma:theta_arrow}.
		\end{itemize2}
		\begin{lemma}\label{lemma:psi}
				Let $u$ be a path in $X^*$. If $\varphi(u) = (i,m,j)$, then $\psi(u) = \big(\theta(i),m,\theta(j)\big)$.
		\end{lemma}
		\begin{proof}
			Let $u=u_1\cdots u_n\in X^*$ such that $\varphi(u_\ell)=(i_\ell,m_\ell,j_\ell)$ and $\varphi(u) = (i,m,j)$. Therefore,
			$\psi(u_\ell)=\big(\theta(i_\ell),m_\ell,\theta(j_\ell)\big)$. However, since for all $1\leq \ell<n$
			$j_\ell = i_{\ell+1}$, we have
			$\phi(u) = (i_1,m_1\cdots m_n,j_n)=(i,m,j)$ and
			$\psi(u) = \big(\theta(i_1),m_1\cdots m_n,\theta(j_n)\big)=\big(\theta(i),m,\theta(j)\big).$
		\end{proof}

		Recall that $\widehat{\varphi}(u) \neq \widehat{\varphi}(v)$.
		Then we can find $u',v'\in X^*$ co-terminal paths of $X^*$ such that $\varphi(u')=\widehat{\varphi}(u)$,
		 $\varphi(v')=\widehat{\varphi}(v)$, $\psi(u')=\widehat{\psi}(u)$
		and $\psi(v')=\widehat{\psi}(v)$.
		 We set $u'=u_1\cdots u_n$ with $u_i\in X$ for any $i$ and
		$v'=v_1\cdots v_p$ with $v_i\in X$ for any $i$.
		To conclude we show that $\psi(u') \neq \psi(v')$ which is absurd. Indeed, if $\varphi(u')=(i,m,j)\in C_{{ks}}(L)$
		and $\varphi(v')=(i,m',j)\in C_{{ks}}(L)$, then $m'\neq m$ since $\widehat{\varphi}$ separates $u$ and $v$. Furthermore, by Lemma~\ref{lemma:psi},
		 we also have $\psi(u')=\big(\theta(i),m,\theta(j)\big)$ and $\psi(v')=\big(\theta(i),m',\theta(j)\big)$ in $C_{sd}(L)$.
		 Finally $\psi(u')\neq \psi(v')$ and thus $C_{ds}$ does not satisfy $(X,u=v)$, holding a contradiction.

		\end{proof}

		Combining the previous theorem with the decidable path equations given in Example~\ref{FinRank-Ex} yields the following corollaries.

		\begin{corollary}\label{FinRank-CorDecid}
		Given a regular language $L$ of stability index $s$ and an integer $k>0$. We have the following results.
		\begin{itemize}
			\item  $L$ belongs to $\FO^2_k[<,\MOD]$ if, and only if, it belongs to
		$\FO^2_k[<,\MOD^{2ks}]$.
			\item $L$ belongs to $\FO[=,\MOD]$ if, and only if, it belongs to	 $\FO[=,\MOD^{2s}]$.
		\end{itemize}

		As the corresponding global varieties are decidable, we get that the fragments $\FO[=,\MOD]$ and $\FO^2_k[<,\MOD]$ for any $k>0$ are decidable.
		\end{corollary}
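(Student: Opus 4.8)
The plan is to instantiate the Delay Theorem for finite rank varieties (Theorem~\ref{delay-FinRank}) for each of the two fragments, and then to obtain decidability by reducing membership in the bounded fragment to a membership test in the global through the derived category theorem (Theorem~\ref{thm:derivedcategory}).

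First I would identify the variety attached to each fragment together with its rank. For $\FO^2_k[<]$ the equivalent variety is $\V_k$, and Example~\ref{FinRank-Ex} together with Figure~\ref{Fig:gVk} exhibits a defining path equation $U_k=V_k$ for its global $\gV_k$ whose supporting graph uses at most $2k$ vertices; hence $\V_k$ has rank at most $2k$. For $\FO[=]$ the equivalent variety is $\ACOM$, the aperiodic commutative monoids $\equations{x^\omega=x^{\omega+1},\ xy=yx}$; its global is captured by the local equation coming from aperiodicity (which needs a single vertex) together with the commutativity equation $xyz=zyx$ of Example~\ref{FinRank-Ex}, which needs two vertices, so $\ACOM$ has rank at most $2$.

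Then I would apply Theorem~\ref{delay-FinRank} with its rank parameter set to the value just computed. Taking the rank of $\V_k$ to be $2k$, the theorem gives that $L$ belongs to $\FO^2_k[<,\MOD]$ if and only if it belongs to $\FO^2_k[<,\MOD^{2ks}]$; taking the rank of $\ACOM$ to be $2$, it gives that $L$ belongs to $\FO[=,\MOD]$ if and only if it belongs to $\FO[=,\MOD^{2s}]$. These are exactly the two stated equivalences.

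For the decidability statement, I would compose these equivalences with Theorem~\ref{thm:derivedcategory}: membership of $L$ in $\cF[\sigma,\MOD^{d}]$, for $d=2ks$ respectively $d=2s$, is equivalent to the derived category $C_{d}(L)$ lying in the global $\gV$. Since the stability index $s$ and the syntactic morphism of $L$ are computable, $C_{d}(L)$ is an effectively computable finite category, and the explicit path equations recalled in Example~\ref{FinRank-Ex} can be tested on it, evaluation of the $\omega$-powers in a finite category being effective. Hence membership in the global is decidable, yielding decidability of $\FO[=,\MOD]$ and of $\FO^2_k[<,\MOD]$ for every $k>0$. As this corollary merely assembles previously established facts, there is no genuine obstacle; the only point deserving attention is reading off the precise ranks, in particular certifying by a vertex count of the graph in Figure~\ref{Fig:gVk} that $\gV_k$ really is described with at most $2k$ vertices, since an overestimate would weaken the delay and break the exact bound $\MOD^{2ks}$ claimed.
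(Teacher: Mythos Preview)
Your proposal is correct and follows essentially the same route as the paper, which simply says the corollary is obtained by ``combining the previous theorem with the decidable path equations given in Example~\ref{FinRank-Ex}''; you have just spelled out the instantiation of Theorem~\ref{delay-FinRank} with the ranks $2k$ and $2$, and made the decidability step explicit through Theorem~\ref{thm:derivedcategory}. One small caveat: your justification that $g\ACOM$ is axiomatised by the one-vertex aperiodicity equation together with the two-vertex commutativity equation tacitly assumes $g(\A\cap\Com)=g\A\cap g\Com$, which is not automatic for arbitrary varieties; the paper sidesteps this by citing the rank-$2$ bound for $\ACOM$ as a known fact rather than deriving it, so you may want to do the same or add a reference.
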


\subsection{Infinitely testable case}\label{SsSection:InfTest}
	In this Section, we present the \emph{infinitely testable} property.
	We then prove that for any expressive enough fragment equipped with all regular predicates, this property holds, leading to a delay.
	In fact, Proposition~\ref{Prop:VD} proves that, given Proposition~\ref{Prop:localpredfragment}, as soon as a fragment contains the local predicates, it will be infinitely testable.
	Theorem~\ref{thm:delayinftest} then proves that a delay can be computed in this latter case.
	Informally, a variety is infinitely testable if the membership of a language
	to the variety only depends on words \emph{long enough}.
	\paragraph{Definition.}

	Given a semigroup $S$, the \emph{idempotents' ideal} of $S$, denoted $\tinf S$,
	is the ideal of $S$ generated by its idempotents, i.e. $\tinf S=SE(S)S$, where $E(S)$ denotes the set of idempotents of $S$.
	Note also that given a morphism
	 $\eta:A^+\to S$, it is the semigroup
	of all elements of $S$ having an infinite number of preimages by $\eta$.
	An aware reader could notice that $\tinf S$ is the set of all elements of $S$ that
	are $\mathcal J$-below an idempotent.
	A variety of semigroups $\V$ is said to be \emph{infinitely testable}
	if the membership of a semigroup to $\V$ is equivalent to
	the membership of its idempotents' ideal.
	Informally, a variety is infinitely testable if its membership
	can be reduced to an algebraic condition on the idempotents' ideal.
	By extension, we say that a fragment of logic is infinitely testable if
	it is characterized by an infinitely testable variety.
	\begin{example}
	The fragment $\FO[=]$ is equivalent to the
	aperiodic and commutative variety $\ACOM$. This fragment is also described by the equations
	$xy=yx$ and $x^{\omega+1} = x^\omega$. This fragment is not infinitely
	testable. For instance the language equal to the singleton $\{ab\}$ has a trivial idempotents' ideal while it is
	not
	definable in $\FO[=]$.
	\end{example}

\begin{example}\label{Prop:FO1-InfTest}
	The fragment $\FO[+1]$ is equivalent to the languages whose syntactic semigroup belongs to
	the variety: $\ACOM*\D$~\citep[Theorem VI.3.1]{Straubing94}.
	This fragment is also described by the profinite equation
	\begin{align*}
	x^\omega u y^\omega v x^\omega w y^\omega =x^\omega w y^\omega v x^\omega u y^\omega\enspace.  \tag{a}\label{acomli}
	\end{align*}
	We now show that it is an infinitely testable fragment.
	Let $L$ be a regular language and $S$ its syntactic semigroup.
	We simply prove that if the equation~\eqref{acomli} is not satisfied by $S$, then it is not satisfied by
	$\tinf{S}$.
	Suppose that there exists $x,y,u,v,w\in S$ such that the  equation~\eqref{acomli} is not satisfied.
	Then by setting:
		$x' = x^\omega,\ 	y' = y^\omega,\
		u' = x^\omega u y^\omega, \  v' = y^\omega v x^\omega, \   w' = x^\omega w y^\omega.$
	All new variables belong to $\tinf S$ and they also fail to satisfy~\eqref{acomli}.
\end{example}

	In fact, the approach given in the last example can be generalised to any variety of the form $\V*\D$.
	This is proved by Proposition~\ref{Prop:VD}.

%
%
%
%
%

\begin{proposition}\label{Prop:VD}
	 Let $\V$ be a variety.
		The variety $\V*\D$ is infinitely testable.
	\end{proposition}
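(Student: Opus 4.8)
The plan is to reduce the whole statement to the Delay Theorem for definite languages (Theorem~\ref{thm:delaydefinite}), which characterises membership in $\V*\D$ via the idempotents' category: a finite semigroup $T$ lies in $\V*\D$ if and only if $T_E$ lies in $\gV$. The key idea is that passing from $S$ to its idempotents' ideal $\tinf S$ leaves this category entirely unchanged, i.e. $S_E=(\tinf S)_E$. Granting this, I would apply Theorem~\ref{thm:delaydefinite} to both $S$ and $\tinf S$ and chain the equivalences $S\in\V*\D \iff S_E\in\gV \iff (\tinf S)_E\in\gV \iff \tinf S\in\V*\D$, which is exactly the infinite testability of $\V*\D$. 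Note that this single argument handles both implications at once, so there is no need to treat the easy direction (which also follows from $\tinf S$ dividing $S$) separately.

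To establish $S_E=(\tinf S)_E$, I would first check that $S$ and $\tinf S$ have the same idempotents. Recalling that $\tinf S$ is the set of elements $\mathcal{J}$-below an idempotent, every $e\in E(S)$ satisfies $e\in S^1eS^1$ and hence $e\in\tinf S$, so $E(S)\subseteq\tinf S$. Since $\tinf S$ is an ideal, hence a subsemigroup, its idempotents are precisely $\tinf S\cap E(S)=E(S)$. Thus the two categories have the same object set.

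The central step is to identify the arrow sets, namely to prove $eSf=e(\tinf S)f$ for all idempotents $e,f$. One inclusion is immediate from $\tinf S\subseteq S$. For the converse, given $s\in S$ I would write $esf=1\cdot e\cdot (sf)\in S^1eS^1\subseteq\tinf S$, and then use $e^2=e$ and $f^2=f$ to get $esf=e\cdot(esf)\cdot f\in e(\tinf S)f$. Since the composition law of both $S_E$ and $(\tinf S)_E$ is the restriction of the multiplication of $S$, the equalities of object sets and arrow sets yield $S_E=(\tinf S)_E$ as categories, completing the reduction.

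The only non-formal point — and hence the main obstacle — is recognising that the idempotents' category is insensitive to restriction to the idempotents' ideal, which boils down to the small computation $eSf=e(\tinf S)f$; once this is seen, everything else is a direct invocation of Theorem~\ref{thm:delaydefinite}. I would take care with the monoid-versus-semigroup convention in the definition of $\tinf S$, since the step $esf\in S^1eS^1$ uses the adjoined unit $1$; this is harmless but worth stating explicitly to keep the argument valid for semigroups without identity.
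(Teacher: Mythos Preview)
Your proposal is correct and follows essentially the same route as the paper: invoke Theorem~\ref{thm:delaydefinite} to reduce membership in $\V*\D$ to membership of the idempotents' category in $\gV$, observe that $S_E=(\tinf S)_E$, and conclude. The paper simply asserts the equality $(\tinf S)_E=S_E$ ``by definition'' without spelling out the computation $eSf=e(\tinf S)f$ that you carefully verify, so your write-up is in fact more detailed than the original.
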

	\begin{proof}
		Let $L$ be a regular language with $\eta:A^*\to M_L$ its syntactic morphism and $S=\eta_L(A^+)$
		its syntactic semigroup.
		Using Theorem~\ref{thm:delaydefinite}, we have that $S$ belongs to $\V*\D$ if and only if
		$S_E$ belongs to $\gV$. To conclude, we just notice that by definition, $(\tinf{S})_E = S_E$, and therefore
		$\V*\D$ is infinitely testable.
	\end{proof}

We finally prove here that if a fragment is equivalent to a variety whose global is infinitely testable, then we can effectively compute a delay, which furthermore is independent from the fragment.
For varieties of the form $\V*\D$, this also gives the decidability thanks to Proposition~\ref{Prop:VD} and
Corollary~\ref{Cor:transfer}.

\begin{theorem}\label{thm:delayinftest}
	Let $\cF[\sigma]$ be a fragment equivalent to a variety $\V$ which is not a group variety and
	let $L$ be a language of stability index $s$. Then $L$ belongs to $\cF[\sigma,+1,\MOD]$ if and only if
	$L$ belongs to $\cF[\sigma,+1,\MOD^s]$.
\end{theorem}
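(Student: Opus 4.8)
The \emph{if} direction is immediate, as $\MOD^s \subseteq \MOD$. For the converse, assume $L \in \cF[\sigma,+1,\MOD]$. A formula mentions only finitely many moduli, so by the reduction of Section~\ref{Section:Finite} there is an integer $d$ with $L \in \cF[\sigma,+1,\MOD^d]$; replacing $d$ by $\mathrm{lcm}(d,s)$ keeps this true, since $\MOD^d\subseteq\MOD^{d'}$ whenever $d\mid d'$, so I may assume $s\mid d$. The whole content of the theorem is then to descend from the witness $d$ down to the stability index $s$.

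First I would turn this into an algebraic statement about the $d$-derived category. Since $\V$ is not a group variety, $\cF[\sigma,+1]$ is expressive enough to define $\WF_d$ and $\mathbf{max}$, so Corollary~\ref{Cor:transfer} lets me replace definability in $\cF[\sigma,+1,\MOD^d]$ by definability of the enriched language $L_d$ in $\cF[\sigma,+1]$ over $\Ae$; by Proposition~\ref{Prop:localpredfragment} this is a membership in $\V*\D$, and the consolidation result identifying the global with the consolidated semigroup turns it into $C_d(L)_\cd \in \V*\D$ (and likewise at $s$). So the theorem reduces to the algebraic equivalence
\[
C_d(L)_\cd \in \V*\D \iff C_s(L)_\cd \in \V*\D .
\]

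The engine is infinite testability. By Proposition~\ref{Prop:VD} the variety $\V*\D$ is infinitely testable, so each side above is equivalent to the membership of the idempotents' ideal, $\tinf{C_d(L)_\cd} \in \V*\D$ and $\tinf{C_s(L)_\cd} \in \V*\D$. Hence it suffices to show these two ideals have the same membership, and for the nontrivial descent I would prove that $\tinf{C_s(L)_\cd}$ divides $\tinf{C_d(L)_\cd}$. Here the stability index enters: an arrow lies in the idempotents' ideal precisely when it factors through an idempotent loop, hence is carried by arbitrarily long words, and for lengths at least $s$ the image $\eta_L(A^\ell)$ is periodic of period $s$ (this is also what makes the local monoids of $C_s(L)$ the stable monoid, Lemma~\ref{monLocaux}). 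Concretely I would take the subsemigroup $U$ of $\tinf{C_d(L)_\cd}$ consisting of $0$ together with the ideal arrows $(i,m,j)$ whose endpoints are the canonical representatives $0\leq i,j<s$; because these representatives are distinct no spurious composition is created, so $(i,m,j)\mapsto(i,m,j)$ is a genuine semigroup morphism $U\to\tinf{C_s(L)_\cd}$, and it is onto because $s$-periodicity lets one realise any ideal arrow of $C_s(L)$ by a long word whose length is moreover the required residue modulo $d$. This exhibits $\tinf{C_s(L)_\cd}$ as a quotient of a subsemigroup of $\tinf{C_d(L)_\cd}$, so by closure of $\V*\D$ under division, $C_d(L)_\cd\in\V*\D$ forces $\tinf{C_s(L)_\cd}\in\V*\D$, whence $C_s(L)_\cd\in\V*\D$ and finally $L\in\cF[\sigma,+1,\MOD^s]$.

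The main obstacle is the surjectivity of $\iota$: one must check that every element of $\tinf{C_s(L)_\cd}$ genuinely lifts to an ideal arrow of $C_d(L)$ with the prescribed small endpoints, which is exactly where the periodicity of $\eta_L(A^\ell)$ beyond the stability index is indispensable, and one must verify that the idempotent factorisation witnessing ideal membership lifts to $C_d(L)$, i.e. that the idempotent loop and the two connecting arrows can all be realised by long words of the correct residues modulo $d$. The remaining difficulties are bookkeeping rather than conceptual: justifying that the transfer of Corollary~\ref{Cor:transfer} and the consolidation identification really apply — both relying on the non-group hypothesis to guarantee expressiveness — and noting that the reverse membership, from $s$ to $d$, is the easy direction already granted by $\MOD^s\subseteq\MOD$.
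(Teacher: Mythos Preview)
Your high-level strategy is the paper's: reduce $\cF[\sigma,+1,\MOD^d]$-definability to membership of an associated semigroup in $\V*\D$, invoke infinite testability of $\V*\D$ (Proposition~\ref{Prop:VD}) to pass to idempotents' ideals, and then exhibit a division between the ideals at level $s$ and at level $ds$. The paper carries this out with the \emph{syntactic semigroups} $S_s$ and $S_{ds}$ of the enriched languages $L_s$ and $L_{ds}$, proving that $\tinf{S_s}$ embeds into $\tinf{S_{ds}}$.

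Your choice to work with the consolidated semigroups $C_s(L)_\cd$ and $C_d(L)_\cd$ instead is where the argument breaks. First, the step ``the consolidation result \ldots turns it into $C_d(L)_\cd\in\V*\D$'' is not justified: Proposition~\ref{Prop:localpredfragment} gives you that the \emph{syntactic} semigroup of $L_d$ lies in $\V*\D$, and that object is not $C_d(L)_\cd$ (it is only a quotient of something close to it; this mismatch is exactly why the proof of Theorem~\ref{thm:derivedcategory} needs the twisted product $\otimes_d\tau$). More fatally, $C_d(L)$ is a \emph{category} and therefore carries identity loops $(i,1,i)$, which are idempotents of $C_d(L)_\cd$. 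Hence $\tinf{C_d(L)_\cd}=C_d(L)_\cd$ and likewise at $s$: every arrow satisfies $(i,m,j)=(i,1,i)(i,m,j)(j,1,j)$. Infinite testability is then vacuous, and your surjectivity claim must cover \emph{all} arrows of $C_s(L)$ with endpoints in $\{0,\dots,s-1\}$, not only those carried by long words. But that fails: with $s=2$, $d=4$ and $m\in\eta_L(A)$ having preimages of length $1$ only, the arrow $(1,m,0)$ lies in $C_2(L)$ (a length-$1$ witness suffices) yet not in $C_4(L)$ (which would require a witness of length $\equiv 3\pmod 4$). The paper's $S_s$ avoids this trap precisely because it is a genuine semigroup over $A_s^+$ with no identity, so $\tinf{S_s}$ really is the set of elements with arbitrarily long preimages --- exactly the hypothesis needed to pump past the stability index.
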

\begin{proof}
	First by Theorem~\ref{semimod}, a language $L$ belongs to $\cF[\sigma,+1,\MOD]$ if and only if there exists
	$d>0$, $L_0,\ldots,L_{d-1}$ in $\cF[\sigma,+1]$ such that
	$$L = \bigcup_{i<d} (A^d)^*A^i \cap \pi_d(L_i\cap \WF_d).$$
	Because $\cF[\sigma]$ is a fragment which is a variety of monoids but not a group variety, the
	language $\WF_d$ and $\textbf{max}$ belongs to $\cF[\sigma,+1]$.
	 We recall that $L_d=\pi_d^{-1}(L)\cap \WFp{d}$ for any $d>0$.
	Thus, for $i<d$,  $L_i\cap \WF_d$ belongs to $\cF[\sigma,+1]$ and we have the equality
	$$L_ = \bigcup L_i \cap \Ae(0,i) \cap \WF_d\in \cF[\sigma,+1].$$
	Therefore, $L$ belongs to $\cF[\sigma,+1,\MOD]$ if and only if there exists $d$ such that
	$L_d$ belongs to $\cF[\sigma,+1]$.
		Thus, it suffices to prove that
		if $L_{ds}$ is definable in $\cF[\sigma,+1]$, then
		$L_s$ is in $\cF[\sigma,+1]$ as well.
		We set $\eta_s:A_s^+\to S_s$ and $\eta_{ds}\colon A_{ds}^+\to S_{ds}$  the syntactic morphisms
		of $L_s$ and $L_{ds}$ respectively.

		\noindent\textbf{Claim.} The semigroup $\tinf{S_s}$ divides $\tinf{S_{ds}}$.\\

		Before proving this claim, let us remark that since a variety of semigroups is closed by division,
		this claim ends the proof. Since
		if $L$ belongs to $\cF[\sigma,\MOD^{ds}]$ then $S_{ds}$ belongs to $\V*\D$ and therefore
		$\tinf{S_{ds}}$ belongs to $\V*\D$ as well.
		By division, $\tinf{S_s}$ belongs to $\V*\D$, and thanks to Proposition~\ref{Prop:VD}, $S_s$ belongs to $\V*\D$. Finally, we deduce that $L_s$
		belongs to $\cF[\sigma+1]$.

		We now aim to construct a division from $\tinf{S_s}$ to $\tinf{S_{ds}}$.
		This is done through the enriched alphabet.
		We introduce the following projection
		$$h\colon\left\{\begin{array}{ccc}
			A_{ds}^+&\to& A_{s}^+\\
			(a,i)&\mapsto& (a,i\bmod{s})
			\end{array}\right.$$
		and $\WFpp{d}$ the language of \emph{well-formed factors},
		 which is the set of well-formed words that do not necessarily start by a letter $(a,0)$.
		 Note that
		${L_{ds} = h^{-1}(L_s)\cap \WFp{s}}$.
		Let us remark also that the image of a word not in $\WFpp{s}$ (resp. $\WFpp{ds}$)  by $\eta_s$ (resp. $\eta_{ds}$)
		has
		an absorbing zero as image by $\eta_s$ (resp. $\eta_{ds}$).
		This zero being idempotent, it belongs to $\tinf{S_s}$ (resp. $\tinf{S_{ds}}$).
		Finally, if two words of $\WFpp{s}$ have the same image by $\eta_s$, then
		they have the same length modulo $s$ and their first (and consequently last) letters have the same enrichment.

		Consider then $x$ a non-zero element of $\tinf{S_s}$.
		We show that
		$$h\inv(\eta_s\inv(x))\cap \eta\inv_{ds}(\tinf{S_{ds}})\neq \emptyset\enspace.$$
		Since $x$ belongs to $\tinf {S_s}$, there exists a word $u$
		of $A_s^+$ of length greater than $s$ in the preimage of $x$.
		And since $\eta_s(A_s^s)=\eta_s(A_s^{2s})$ by definition of the stability index,
		for any $k>0$ there exists a word $v_k$ of $A_s^+$ of length greater than $ks$ such that
		$u\equiv_L v_k$ and $|u|=|v_k| \bmod s$, since
		$\eta_s(u)=\eta_s(v_k)$.
		Then for $k$ sufficiently large, there exists a word $w$ in $h\inv(v_{k})$,
		such that $\eta_{ds}(w)$ belongs to $\tinf{S_{ds}}$.
		Note that by taking $k$ as a multiple of $d$, we obtain a word $w$ such that
		$|u|\bmod s= |w|\bmod ds$.
		Thus for each element $x\in \tinf{S_s}$, we can choose such an element, that we denote $w_x$.
		This justifies the definition of the following function:
		$$f\colon\left\{\begin{array}{cccl}
		 \tinf{S_s}&\to& \tinf{S_{ds}}\\
		 x &\mapsto& \eta_{ds}(w_x)&\text{ if }x\neq0\\
		 0&\mapsto & 0&\text{ otherwise.}
		\end{array}\right.$$
		We conclude by proving that $f$ is an injective morphism, and thus
		$\tinf{S_s}$ is a subsemigroup of $\tinf{S_{ds}}$.
	\begin{bulitem}
		\item[The application $f$ is a morphism.] Let $x,y\in \tinf{S_s}$. We show that
		$f(xy)=f(x)f(y)$.
		First, we can assume without loss of generality that $x\neq 0$ and $y\neq 0$.
		We remark that since $|w_x|\bmod ds=|h(w_x)|\bmod s$,
		the concatenated word $w_xw_y$ is well-formed if, and only if,
		$h(w_x)h(w_y)$ is well-formed too.
		If $xy\neq 0$.Then, $xy$ have
		a well-formed preimage
		and $w_xw_y$ is well-formed.
		Then as $w_{xy}$ and $w_xw_y$ are syntactically equivalent with respect to
		both $\WFpp{ds}$ and $h^{-1}(L_s)$,
		$\eta_{ds}(w_{xy}) =\eta_{ds}(w_xw_y)=\eta_{ds}(w_x)\eta_{ds}(w_y),$ meaning that $f(xy)=f(x)f(y)$.

		Now if $xy=0$, then either $xy$ has no well-formed preimage or
		$xy$ is a zero for $\pi_s\inv(L)$.
		In the latter case, then $f(x)f(y)=0$ according to the previous point.
		If $xy$ has no well-formed preimage, then $w_xw_y$ is not well-formed
		and consequently $f(x)f(y)=0$.
		\item[The application $f$ is injective.]
		Let $x,y\in \tinf{S_s}$ be such that $x\neq y$. Without loss of generality, we  assume that $x\neq 0$.
		Necessarily, there exist $p,q\in S_s$ such that $pxq\in \eta_s(L_s)$ if, and only if, $pyq\not\in \eta_s(L_s)$.
		Let $u$ and $v$ be words from the preimage of $p$ and $q$ respectively.
		Then there exists two words $u'\in h^{-1}(u)\cap \WFpp{ds}$ and $v'\in h^{-1}(v)\cap\WFpp{ds}$ such that
		$u'w_xv'\in L_{ds}$ if, and only if, $u'w_yv'\not\in L_{ds}$. Therefore, we have $f(x)\neq f(y)$ and $f$ is injective.
	\end{bulitem}
	\end{proof}
\\

The following proposition deals with fragments which are not varieties of groups. Varieties of groups are notoriously ill behaving with respect to their global. Indeed ~\cite{Auinger10} exhibited a variety of group $\mathbf{H}$ such that
$\mathbf{g(LH)}$ is undecidable (as a variety of \emph{semigroupoids}). However, for a local variety $\V$ which is not a variety of groups, the
variety of semigroups $\LV$ is local, as proved in~\cite{PapermanPhd}. Since this article does not deal with the framework
of varieties of \emph{semigroupoids}, we provide a self contain proof extracted from this latter result.

\begin{proposition}\label{prop:QLV}
		Let $\cF[\sigma]$ be a
		 fragment equivalent to a local variety $\V$ which is not a variety of groups.
		Then $\cF[\sigma,+1,\MOD]$
		is equivalent to $\QLV$.
\end{proposition}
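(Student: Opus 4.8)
The plan is to push the modular information into the alphabet, apply over the enriched alphabet the known characterisation of the successor enrichment, and then read the resulting condition off the derived category $C_d(L)$, whose local monoids are exactly the iterated images $\eta_L((A^d)^*)$ by Lemma~\ref{monLocaux}. This is meant to be the ``self-contained'' translation of the locality of $\LV$ promised in the surrounding text.

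First I would reduce to the enriched alphabet exactly as in the proof of Theorem~\ref{thm:delayinftest}: since $\V$ is not a group variety, $\WF_d$ and $\textbf{max}$ are definable in $\cF[\sigma,+1]$, so Theorem~\ref{semimod} applied to the base fragment $\cF[\sigma,+1]$ gives that $L\in\cF[\sigma,+1,\MOD]$ if and only if there is an integer $d>0$ with $L_d=\pi_d^{-1}(L)\cap\WF_d$ definable in $\cF[\sigma,+1]$. Writing $S_d$ for the syntactic semigroup of $L_d$, Proposition~\ref{Prop:localpredfragment} rephrases this as $S_d\in\V*\D$, and since $\V$ is local ($\gV=\lV$) this amounts to every local monoid of $S_d$ lying in $\V$. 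So the whole proposition reduces to showing that ``some $S_d$ has all local monoids in $\V$'' is equivalent to ``the stable semigroup $\eta_L(A^s)$ lies in $\LV$''.

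Next I would identify the local monoids of $S_d$. An idempotent of $S_d$ is the class of a well-formed loop, hence is indexed by a congruence class $i\in\Zd$ together with an idempotent $\bar e$ of the local monoid of $C_d(L)$ at $i$, which by Lemma~\ref{monLocaux} is $\eta_L((A^d)^*)$; the corresponding local monoid $eS_de$ is then, away from the absorbing zero coming from non-well-formed words, a quotient of the localisation $\bar e\,\eta_L((A^d)^*)\,\bar e$. The adjoined zero is harmless precisely because $\V$ is not a group variety: such a $\V$ contains a non-group monoid, hence contains $U_1=\{1,0\}$ (which divides any monoid with two idempotents), and then for $N\in\V$ the monoid $N^0$ divides $N\times U_1$, so $N\in\V$ iff $N^0\in\V$. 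Thus $S_d\in\LV$ is genuinely a statement about the localisations of $\eta_L((A^d)^*)$. For the sufficiency direction this already closes the argument with the witness $d=s$: by Lemma~\ref{monLocaux} the local monoid of $C_s(L)$ is the stable monoid, the idempotents realised in $S_s$ are exactly the idempotents of the stable semigroup, and each $eS_se$ is a \emph{quotient} of a local monoid of the stable semigroup; since varieties are closed under quotients, $\eta_L(A^s)\in\LV$ forces $S_s\in\LV$, whence $L_s\in\cF[\sigma,+1]$ and $L\in\cF[\sigma,+1,\MOD]$.

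The hard part will be the converse. Assuming $S_d\in\LV$ for some $d$, which I may take to be a multiple of $s$ using Proposition~\ref{prop:division_cat} so that $\eta_L(A^s)\subseteq\eta_L((A^d)^*)$, I must show $f\,\eta_L(A^s)\,f\in\V$ for each idempotent $f$ of the stable semigroup. The obstacle is that $eS_de$ is only a \emph{quotient} of $\bar e\,\eta_L((A^d)^*)\,\bar e$: distinct elements of $M_L$ may collapse in $L_d$ because only well-formed contexts of a prescribed length-congruence are available, and knowing the collapsed quotient lies in $\V$ does not a priori put the original local monoid in $\V$. This is exactly the phenomenon that forced the twisted product $\otimes_d\tau$ in the proof of Theorem~\ref{thm:derivedcategory}, and I would resolve it the same way: consider the local monoids of $S_d$ at the idempotent $f$ placed simultaneously at every shift $i\in\Zd$; their product still lies in $\V$, and the diagonal map from $f\,\eta_L(A^s)\,f$ into this product is \emph{injective}, since any two distinct elements are separated in the syntactic monoid $M_L$ by some context whose length-congruence picks out a shift $i$ on which they remain distinct, the stability index allowing that context to be realised as a well-formed word of the correct congruence. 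Hence $f\,\eta_L(A^s)\,f$ divides a finite product of members of $\V$ and so belongs to $\V$, giving $\eta_L(A^s)\in\LV$, i.e. $L\in\QLV$. Establishing this separation cleanly—rather than the reduction and the bookkeeping around the zero—is the real content of the proof.
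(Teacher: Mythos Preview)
Your plan is sound, and in the forward (hard) direction it is actually more careful than the paper's argument.

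The paper proceeds differently. It first invokes Theorem~\ref{thm:delayinftest} to pin the modulus at $d=s$ from the outset, obtaining the chain $L\in\cF[\sigma,+1,\MOD]\Leftrightarrow L_s\in\cF[\sigma,+1]\Leftrightarrow S_s\in\LV$, so that both implications are argued at $d=s$. For the direction $\QLV\Rightarrow S_s\in\LV$ the paper does exactly what you do: each non-zero local monoid $eS_se$ is a quotient of $f\,T\,f\cup\{0\}$ for a suitable idempotent $f$ of the stable semigroup $T$, and the adjoined zero is absorbed via $U_1$. For the converse $S_s\in\LV\Rightarrow\QLV$ the paper takes the subsemigroup $\eta_s\big((A_s^s)^+\cap\WF_s\big)$ of $S_s$ and asserts it is \emph{isomorphic} to the stable semigroup.

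Your route avoids Theorem~\ref{thm:delayinftest} and instead embeds each $f\,\eta_L(A^s)\,f$ diagonally into a product over all shifts $i\in\Z_d$ of local monoids of $S_d$, exactly in the spirit of the twisted product from Theorem~\ref{thm:derivedcategory}. This buys you something real: the isomorphism the paper asserts need not hold. The surjection from the stable semigroup onto $\eta_s\big((A_s^s)^+\cap\WF_s\big)$ can collapse elements, because syntactic separation in $L_s$ of words in that set only tests left contexts of length $\equiv 0\bmod s$. For instance, with $A=\{a,b\}$ and $L=AaA^*$ one computes $s=2$; the stable-semigroup elements $\eta_L(aa)$ and $\eta_L(ba)$ are distinct (separated by any length-$1$ left context) yet coincide after the projection, since every even-length left context either already determines membership in $L$ or reduces to the empty context, where both words behave identically. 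Your product over shifts is precisely what recovers the missing length classes and makes the diagonal injective; so the obstacle you flagged is genuine, and your resolution is the right one.
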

\begin{proof}
First we remark that since $\cF[\sigma]$ is equivalent to a local variety, by
Proposition~\ref{Prop:localpredfragment}, and by definition of locality,
 $\cF[<,+1]$ is equivalent to $\LV=\V*\D$. Furthermore, since $\V$ is not a variety of groups,
 $(ab)^*$ belongs to $\LV$. Therefore, we obtain the following:
 $$L\in\cF[<,+1,\MOD]\underbrace{\text{ if and only if }}_{\text{By Theorem~\ref{thm:delayinftest}}}
 L \in \cF[<,+1,\MOD^s]\underbrace{\text{ if and only if }}_{\text{By Theorem~\ref{semimod}}} L_s \in \cF[<,+1]$$

	\noindent{\textbf{Claim:}} $L_s$ belongs to $\LV$ if and only if
	$L$ belongs to $\QLV$.

	We now prove both implications of the claim. In the following $S_s$ will be the syntactic semigroup
	of $L_s$ and $S$ the one of $L$.
	\begin{itemize}
	\item Assume that $L_s$ belongs to $\LV$.
	 Let $T= (A_s^s)^+\cap \WFp{s}$. We remark that $T$
	is a semigroup. Therefore, the set $\eta_s(T)$ is a subsemigroup of $S_s$. Since $S_s$ belongs to $\LV$,
	the semigroup $\eta_s(T)$ belongs to $\LV$ as well.
	 Remark now that $S_s$ is a quotient of the product of $S$ and the syntactic semigroup of
	$\WFp{s}$. Since the image of $\pi_s(T)$ in the syntactic monoid of $L$ is the stable semigroup of $L$ and
	 the image of $T$ in the syntactic
	semigroup of $\WFp{s}$ is trivial, we can conclude as $\eta_s(T)$ is isomorphic to the
	stable semigroup of $L$.

		\item Assume that $L$ belongs to $\QLV$, and we denote by $T$ its stable semigroup.
		By hypothesis,
		 $T$ is in $\LV$. One can remark that since $\V$ is not a variety of groups, it contains the semigroup
		 $U_1=\{0,1\}$ (equipped with the integer multiplication).
		 Therefore, the semigroup $T\cup\{0\}$, obtained by adding an
		absorbing element, also belongs to $\LV$. Indeed, it divides $T\times U_1$.

		 We now have to show that $L_s$ is in $\LV$ as well.
		Let $e$ be an idempotent of $S_s$. First, if $e$ is the zero of $S_s$, then
		$eS_se=\{e\}$. Otherwise, $e$ is the image of a well-formed factor $u$ that starts by a letter of
		the form $(a,i)$ and ends by a letter of the form $(a,j)$ with $j+1\equiv i\bmod{s}$.
		We denote by		$f$ the image of $\pi_s(u)$ by the syntactic morphism of $L$. This element
		is idempotent and, therefore, belongs to $T$.
		We conclude by noting that
		the local monoid  $eS_se$ is a quotient of $fTf\cup\{0\}$.
	\end{itemize}
\end{proof}

\section{Conclusion}\label{Section:Concl}
In this paper, we studied the definability problem for fragments of logic enriched with the modular predicates.
We presented a generic approach that gives the decidability of this problem in many cases, while the main applications are to the alternation hierarchies of the first order logic and its two variables counterpart.

The global approach is divided in two steps.
The first one relies entirely on logic.
We prove that adding a finite set of modular predicates preserves the decidability, given that the fragment is expressive enough.

The second part, which we call the delay problem, consists in deciding which finite set of modular predicates should be added to express a given regular language.
This is the most intricate part of the paper.
While unable to solve this question for any given fragment, we were able to reduce, following some known results, this question to a decidability question on the global of a fragment, a variety of categories.
Then decidability was obtained for many fragments, using different approaches.
They can be sorted in two cases.
The first case is when the global is understood and finitely describable.
Then we are able to decide a delay depending on the stability index and the said description.
The second case is when the fragment is expressive enough to handle the modular predicates.
This happens in particular if the fragment contains the local predicates and can use them extensively.

The main applications of these results are given in Figure~\ref{TableauFinal}, mainly on the levels of the quantifier alternation hierarchies, although this approach can be used on other fragments that satisfy the same hypotheses, such as the fragment $\FO[+1]$.

An interesting fact is that while the stability index often serves as a valid delay, this is still open whether this would hold for varieties of rank greater than two.

The question of solving the adding of modular predicate in a general setting seems achievable, although the more natural question would be to solve the decidability of the semidirect product by $\MODV$.
While we avoided this characterisation as it served no purpose in our approach, an aware reader could have noticed that Theorem~\ref{thm:derivedcategory}
proves that adding modular predicates is algebraically equivalent to a semidirect product by the length-multiplying variety of morphisms $\MODV$.
Then our question reduces to whether this semidirect product preserves decidability.
\cite{Auinger10} proved that the semidirect product in general does not preserve decidability, but the problem is still open for the case of $\MODV$.

\bibliographystyle{abbrvnat}
\bibliography{biblio}

\end{document}